%%%%%%%%%%%%%%%%%%%%%%%%%%%%%%%%%%%%%%%%%%%%%%%%%%%%%%%%%%%%%%%%%
%
%

%
%   most recent changes at
%
%\def\Datum{}
%
%   by
%
%%%%%%%%%%%%%%%%%%%%%%%%%%%%%%%%%%%%%%%%%%%%%%%%%%%%%%%%%%%%%%%%%

%%%%%%%%%%%%%%%%%%%%%%%%%%%%%%%%%%%%%%%%%%%%%%%%%%%%%%%%%%%%%%%%%
\documentclass[11pt,reqno]{amsart}
\usepackage{graphicx}
%\usepackage[margin=3cm,bindingoffset=0cm,nofoot]{geometry}
%\usepackage{refcheck}
%%%%%%%%%%%%%%%%%%%%%%%%%%%%%%%%%%%%%%%%%%%%%%%%%%%%%%%%%%%%%%%%%
\usepackage{mathtools}
\usepackage{ulem}
\usepackage{color}

\setlength{\hoffset}{-2cm}
\setlength{\voffset}{0cm}
\setlength{\textwidth}{16.3cm}
\setlength{\textheight}{22cm}

\theoremstyle{plain}

\newtheorem{thm}{Theorem}[section]

\newtheorem{prop}[thm]{Proposition}

\theoremstyle{remark}

\newtheorem{rem}[thm]{Remark}

\numberwithin{equation}{section}
\theoremstyle{definition}

%%%%%%%%%%%%%%%%%%%%%%%%%%%%%%%%%%%%%%%%%%%%%%%%%%%%%%%%%%%%%%%%%
\newcommand{\beq}{\begin{equation}}
\newcommand{\eeq}{\end{equation}}
\newcommand{\bee}{\begin{equation*}}
\newcommand{\eee}{\end{equation*}}

\def\si{{\sigma}}

\DeclareMathAlphabet{\doba}{U}{msb}{m}{n}

\let\tr\trace

\def\divergence{{\mathop{\rm div}}}

\let\<\langle
\let\>\rangle

\newcommand{\definedas}{\mathrel{\raise.095ex\hbox{\rm :}\mkern-5.2mu=}}

%%%%%%%%%%%%%%%%%%%%%%%%%%%%%%%%%%%%%%%%%%%%%%%%%%%%%%%%%%%%%%%%%
\begin{document}
%%%%%%%%%%%%%%%%%%%%%%%%%%%%%%%%%%%%%%%%%%%%%%%%%%%%%%%%%%%%%%%%%

%%%%%%%%%%%%%%%%%%%%%%%%%%%%%%%%%%%%%%%%%%%%%%%%%%%%%%%%%%%%%%%%%
%\begin{center}
%  \framebox{\framebox{
%      \vbox{This is project {\red \Project}\\
%        Current version {\blue\Version}, from {\blue\Datum}, most
%        recent changes by {\blue\Person}.}  }}
%\end{center}
%%%%%%%%%%%%%%%%%%%%%%%%%%%%%%%%%%%%%%%%%%%%%%%%%%%%%%%%%%%%%%%%%

\title[]{Construction of vacuum initial data by the conformally covariant split system}
\author{Patryk Mach}
\address[Patryk Mach]{Instytut Fizyki im.\ Mariana Smoluchowskiego, Uniwersytet Jagiello\'{n}ski, {\L}ojasiewicza 11, 30-348 Krak\'{o}w, Poland.}
\email{patryk.mach@uj.edu.pl}
\author{Yaohua Wang}
\address[Yaohua Wang]{School of Mathematics and Statistics, Henan University, Kaifeng, Henan 475004, China.}
\email{wangyaohua@henu.edu.cn}
\author{Naqing Xie}
\address[Naqing Xie]{School of Mathematical Sciences, Fudan
University, Shanghai 200433, China.}
\email{nqxie@fudan.edu.cn}

\begin{abstract}
Using the implicit function theorem, we prove existence of solutions of the so-called conformally covariant split system on compact 3-dimensional Riemannian manifolds. They give rise to non-Constant Mean Curvature (non-CMC) vacuum initial data for the Einstein equations. We investigate the conformally covariant split system defined on compact manifolds with or without boundaries. In the former case, the boundary corresponds to an apparent horizon in the constructed initial data. The case with a cosmological constant is then considered separately. Finally, to demonstrate the applicability of the conformal covariant split system in numerical studies, we provide numerical examples of solutions on manifolds $\mathbb S^1 \times \mathbb S^2$ and $\mathbb S^1 \times \mathbb T^2$.
\end{abstract}

\subjclass[2010]{53C21 (Primary), 35Q75, 53C80, 83C05 (Secondary)}
%
% Latex changes the classification to 1991 ?!  it should be 2010!
%
% MSC 2010:
%
% 53C21 Methods of Riemannian geometry, including PDE methods;
% curvature restrictions 35Q75 PDEs in connection with relativity and
% gravitational theory 53C80 Applications to physics 83C05 Einstein's
% equations (general structure, canonical formalism, Cauchy problems)

%\date{\today}

%\date{16 November 2018}

\keywords{Einstein constraint equations, conformally covariant split.}

\maketitle

%\tableofcontents

%%%%%%%%%%%%%%%%%%%%%%%%%%%%%%%%%%%%%%%%%%%%%%%%%%%%%%%%%%%%%%%%%%%%%%%%%%%%%%%%
%%%%%%%%%%%%%%%%%%%%%%%%%%%%%%%%%%%%%%%%%%%%%%%%%%%%%%%%%%%%%%%%%%%%%%%%%%%%%%%%

\section{Introduction}

Let $(M,g)$ be a compact 3-dimensional Riemannian manifold. Let $\sigma$ denote a symmetric trace- and divergence-free (TT) tensor of type $(0,2)$, and let $\tau$ be a function on $M$. Lichnerowicz \cite{Li44} and Choquet-Bruhat and York \cite{CBY80} developed the so-called conformal method to transform \textit{seed data} $(M, g, \sigma, \tau)$ into \textit{initial data} satisfying Einstein constraint equations. Consider the following system of equations for a positive function $\phi$ and a one-form $W$:
\begin{subequations}\label{SD}
\begin{align}
-8 \Delta \phi + R_g \phi &= -\frac{2}{3} \tau^2 \phi^{5} + |\sigma + L_gW|_g^2 \phi^{-7}, \label{hamiltonian} \\
\Delta_L W &= \frac{2}{3} \phi^6 d\tau. \label{momentum}
\end{align}
\end{subequations}
Here $\Delta = \nabla_i \nabla^i$ and $R_g$ are the Laplacian and the scalar curvature computed with respect to metric $g$, and $\Delta_L W$ is defined as $\Delta_L W = \mathrm{div}_g (L_g W)$, where $L_g $ is the conformal Killing operator,
\begin{equation}
\label{LWdef}
(L_g W)_{ij}=\nabla_iW_j +\nabla_j W_i -\frac{2}{3}(\mbox{div}_gW)g_{ij}.
\end{equation}
Equation (\ref{hamiltonian}) is called the \textit{Lichnerowicz equation}, and equation (\ref{momentum}) is called the \textit{vector equation}. System (\ref{SD}) is referred to as the \textit{vacuum conformal constraints}. A dual to a form $W$ satisfying the equation $L_g W = 0$ is called a conformal Killing vector field.

Suppose that a pair $(\phi,W)$ solves the vacuum conformal constraints (\ref{SD}). Define $\widetilde g = \phi^4 g$, and $K = \frac{\tau}{3} \phi^{4} g + \phi^{-2} (\si + L_g W)$. Then the triple $(M, \widetilde g, K)$ becomes an initial data set satisfying the \textit{vacuum Einstein's constraints}
\begin{subequations}\label{originalconstraints}
\begin{align}
R_{\widetilde g} - |K|^2_{\widetilde g} + (\tr_{\widetilde g} K)^2 &= 0, \\
\divergence_{\widetilde g} K - d \tr_{\widetilde g} K &= 0.
\end{align}
\end{subequations}
Note that $\mathrm{tr}_{\widetilde g} K = \tau$. Choquet-Bruhat and Geroch \cite{CBG69} showed that such initial data give rise to a unique development.

Here we briefly review the current status of the study of the existence of the solutions of the vacuum conformal constraints for closed manifolds $(M,g)$. The case of constant $\tau$ is basically understood, cf.\ \cite{Isenberg}. Of course, many important results can be still obtained assuming a constant $\tau$. Examples include studies of multiplicity of solutions in the case with a positive cosmological constant \cite{premoselli, walter, Mach_Knopik}, or obtaining foliations of important spacetimes \cite{malec, beig}. In general, the case of non-constant $\tau$ remains still open. Some results are obtained when $d \tau /\tau$ or $\sigma$ are small, cf.\ \cite{IM92,IsenbergOMurchadha,ACI08,MaxwellNonCMC}. Results for rough initial data can be found in \cite{HNT2, Maxwellrough}. Interested readers may refer to a survey paper, for instance \cite{BartnikIsenberg}.

Recently, Dahl, Gicquaud, and Humbert proved the following criterion for the existence of solutions to Eqs.\ (\ref{SD}) \cite{DGH}. Assume that $(M,g)$ has no conformal Killing vector fields and that $\sigma \not \equiv 0$, if the Yamabe constant $Y(g) \geq 0$. Then, if the \textit{limit equation}
\begin{equation}
\label{eqLimit1}
\Delta_L W = \alpha \sqrt{\frac{2}{3}} |L_g W|_g \frac{d\tau}{\tau}
\end{equation}
has no nonzero solutions for all $\alpha \in (0,1]$, the vacuum conformal constraints (\ref{SD}) admit a solution $(\phi,W)$ with $\phi>0$. Moreover, they provided an example on the sphere $\mathbb S^3$ such that the limit equation \eqref{eqLimit1} does have a nontrivial solution for some $\alpha_0 \in (0,1]$ \cite[Prop.\ 1.6]{DGH}. Unfortunately, the result of Dahl, Gicquaud, and Humbert is not an alternative criterion. In fact, there also exists an example such that both the limit equation \eqref{eqLimit1} and the vacuum conformal constraints (\ref{SD}) have nontrivial solutions \cite[Prop. 3.10]{TCN}.

There is another way to construct vacuum initial data. It is sometimes referred to as `the conformally covariant split' or, historically, `Method B.
' (See \cite[Section 4.1]{BartnikIsenberg} or the original paper \cite{York}) In this case we are trying to find a positive function $\phi$ and a one-form $W$ satisfying the so-called `conformally covariant split system:'
\begin{subequations}\label{MB}
\begin{align}
\Delta \phi -\frac{1}{8}R_g \phi + \frac{1}{8}|\si|_g^2\phi^{-7} + \frac{1}{4} \langle \si, L_g W \rangle_g \phi^{-1} - \left( \frac{1}{12}\tau^2-\frac{1}{8} |L_g W|_g^2 \right) \phi^{5} &= 0, \label{hamiltonian2} \\
\nabla_i(L_g W)^i_j-\frac{2}{3} \nabla_j\tau+6(L_g W)^i_j\nabla_i\log \phi &= 0. \label{momentum2}
\end{align}
\end{subequations}
Here $\sigma\in W^{1,p}$ is a symmetric TT-tensor of type $(0,2)$, $\tau \in W^{1,p}$ is a function on $M$, and $g \in W^{2,p}$ is a metric on $M$. We assume that $p > 3$. We devote this paper to the study of system (\ref{MB}).

Let $\tilde g = \phi^4 g$, and
\[ K = \frac{\tau}{3} \phi^{4} g + \phi^{-2} \si + \phi^{4} L_g W. \]

\begin{prop}
For $(\phi, W)$ solving system \eqref{MB}, the triple $(M, \widetilde g, K)$ becomes vacuum initial data.
\end{prop}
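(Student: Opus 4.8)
The plan is to verify directly that the conformal rescaling $\widetilde g = \phi^4 g$ together with the prescribed $K$ satisfies the two constraint equations \eqref{originalconstraints}, using \eqref{hamiltonian2}--\eqref{momentum2} as the two inputs. This is, in the end, a computation in conformal geometry, so the whole proof should consist of (i) recording the standard transformation rules under $\widetilde g = \phi^4 g$ for the scalar curvature and for the divergence of a symmetric $2$-tensor, (ii) substituting the given $K$, and (iii) matching terms with \eqref{MB}. I would state up front the two classical facts used: in dimension $3$, $R_{\widetilde g} = \phi^{-5}\left(-8\Delta\phi + R_g\phi\right)$, and for a symmetric trace-free $(0,2)$-tensor $A$ one has $\divergence_{\widetilde g}(\phi^{-2} A) = \phi^{-6}\divergence_g A$ (the conformal weight that makes the momentum operator covariant — this is precisely why the method is called "conformally covariant"). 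I would also note $\trace_{\widetilde g} K = \tfrac13 \tau \,\trace_{\widetilde g}(\phi^4 g) = \tau$, since $\widetilde g^{-1} = \phi^{-4} g^{-1}$ and the $\si$- and $L_gW$-parts are trace-free with respect to $g$ hence with respect to $\widetilde g$.

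For the momentum constraint, I would write the trace-free part of $K$ as $A \definedas \phi^{-2}\si + \phi^4 L_g W = \phi^{-2}\left(\si + \phi^6 L_g W\right)$, so that $\divergence_{\widetilde g} K = \divergence_{\widetilde g}(\phi^{-2}(\si+\phi^6 L_gW)) + d\trace_{\widetilde g}K = \phi^{-6}\divergence_g(\si+\phi^6 L_g W) + d\tau$. Since $\si$ is divergence-free, the Leibniz rule gives $\divergence_g(\phi^6 L_g W)_j = \phi^6 \nabla_i (L_gW)^i{}_j + 6\phi^6 (L_gW)^i{}_j \nabla_i\log\phi$, and dividing by $\phi^6$ turns the bracket into exactly the left-hand side of \eqref{momentum2} plus $\tfrac23 \nabla_j\tau$. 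Combined with the $d\tau$ term, \eqref{momentum2} is equivalent to $\divergence_{\widetilde g} K = d\trace_{\widetilde g}K$. For the Hamiltonian constraint I would compute $|K|^2_{\widetilde g} - (\trace_{\widetilde g}K)^2$ by expanding $|K|^2_{\widetilde g} = \tfrac{\tau^2}{3} + |A|^2_{\widetilde g}$ (the cross term $\langle \tfrac{\tau}{3}\phi^4 g, A\rangle_{\widetilde g}$ vanishes because $A$ is trace-free) and $|A|^2_{\widetilde g} = \phi^{-12}|\si + \phi^6 L_gW|^2_g = \phi^{-12}|\si|_g^2 + 2\phi^{-6}\langle\si, L_gW\rangle_g + |L_gW|_g^2$, using that $|\cdot|^2_{\widetilde g}$ on a $(0,2)$-tensor scales by $\phi^{-8}$ and an extra $\phi^{-4}$ comes from each lowered pair; I will be careful about these weights. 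Then $R_{\widetilde g} = |K|^2_{\widetilde g} - (\trace_{\widetilde g}K)^2 = \phi^{-12}|\si|^2_g + 2\phi^{-6}\langle\si,L_gW\rangle_g + |L_gW|^2_g - \tfrac23\tau^2$, and multiplying through by $\tfrac18\phi^5$ and using $R_{\widetilde g}=\phi^{-5}(-8\Delta\phi+R_g\phi)$ reproduces \eqref{hamiltonian2} after rearrangement.

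The only genuine obstacle is bookkeeping: getting every power of $\phi$ and every combinatorial coefficient right in the conformal transformation of $|A|^2_{\widetilde g}$ and of $R_{\widetilde g}$, and double-checking the sign conventions for $\Delta$ and the Yamabe operator fixed in the introduction (note the paper writes $-8\Delta\phi + R_g\phi$, with $\Delta = \nabla_i\nabla^i$). I would also verify the regularity claim implicitly needed — that $\widetilde g \in W^{2,p}$ and $K\in W^{1,p}$ — which follows from $\phi$ being a positive $W^{2,p}$ function (by elliptic regularity for \eqref{hamiltonian2}, $p>3$ so $W^{2,p}\hookrightarrow C^1$ and the product structure is preserved) together with the assumed regularity $\si,\tau\in W^{1,p}$, $g\in W^{2,p}$. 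Once the algebra is checked these two identities are exactly \eqref{originalconstraints}, completing the proof.
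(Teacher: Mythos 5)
Your proposal is correct and follows essentially the same route as the paper's own proof: a direct verification using the conformal transformation law for the scalar curvature, the identity $\divergence_{\widetilde g}(\phi^{-2}\gamma)=\phi^{-6}\divergence_g\gamma$ for trace-free symmetric tensors, and term-by-term matching with Eqs.\ \eqref{hamiltonian2}--\eqref{momentum2}. The grouping $\phi^{-2}(\sigma+\phi^6 L_gW)$ and the added remark on Sobolev regularity are only cosmetic differences.
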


\begin{proof}
Denote $\widetilde{g}=\phi^{4} g$ and $K=\frac{\tau}{3} \phi^{4} g + \phi^{-2} \si + \phi^{4} L_g W$.
Combining the equations
\[ R_{\widetilde g}=\phi^{-5}(R_g \phi -8\Delta \phi), \]
\[ \tr_{\widetilde{g}}K= \frac{\tau}{3} \phi^{4} \tr_{\widetilde{g}}g+\phi^{-2} \tr_{\widetilde{g}}\si+\phi^{4} \tr_{\widetilde{g}}L_g W=\tau, \]
and
\[ |K|_{\widetilde{g}}^2=\frac{1}{3}\tau^2+\phi^{-12}|\si|_g^2 + |L_g W|_g^2 +2\phi^{-6} \langle \si,L_g W \rangle_g, \]
one gets
\begin{align*}
\MoveEqLeft R_{\widetilde g}-|K|_{\widetilde g}^2+(\tr_{\widetilde{g}}K)^2 \\
&= \phi^{-5}(R_g \phi -8\Delta \phi) -\left(\frac{1}{3}\tau^2+\phi^{-12}|\si|_g^2 + |L_g W|_g^2 +2\phi^{-6} \langle \si, L_g W \rangle_g \right)+\tau^2.
\end{align*}
Equation \eqref{hamiltonian2} implies $R_{\widetilde g}-|K|_{\widetilde{g}}^2+(\tr_{\widetilde{g}}K)^2=0$. On the other hand, by virtue of the fact that $\divergence_{\widetilde{g}}(\phi^{-2}\gamma)=\phi^{-6}\divergence_g(\gamma)$ for any trace-free symmetric $(0,2)$-tensor $\gamma_{ij}$ \cite[Eqn. ($\mathcal{C}$)]{De}, we have
\begin{align*}
\widetilde{\nabla}^iK_{ij} &= \widetilde{\nabla}^i \left[ \frac{\tau}{3} \widetilde{g}_{ij} + \phi^{-2} \si_{ij} + \phi^{4} (L_g W)_{ij} \right]\\
&= \frac{\widetilde{\nabla}^i\tau}{3}\widetilde{g}_{ij}+\phi^{-6}\nabla^i\si_{ij} +\phi^{-6}\nabla^i \left[ \phi^{6} (L_g W)_{ij} \right].
\end{align*}
Thus
\[ \widetilde{\nabla}^iK_{ij}-\widetilde{\nabla}_j\tr_{\widetilde{g}}K=  \nabla^i(LW)_{ij}+ 6\phi^{-1} (\nabla^i \phi)(L_g W)_{ij}-\frac{2 \nabla_j \tau}{3}. \]
Equation \eqref{momentum2} implies $\divergence_{\widetilde{g}} K - d \tr_{\widetilde{g}} K =0$.
\end{proof}

\begin{rem}\label{scaling}
An immediate observation concerning system (\ref{MB}) is that it admits the following scaling symmetry. Suppose that system (\ref{MB}) has a solution $(\phi, W)$. Set $\hat \phi = \mu^{-\frac{1}{4}} \phi$, $\hat W = \mu^{\frac{1}{2}} W$ for some positive number $\mu \in \mathbb R^+$. Then $(\hat \phi, \hat W)$ satisfy system (\ref{MB}) with the data $\hat \sigma$ and $\hat \tau$ given by $\hat \sigma_{ij} = \mu^{-1} \sigma_{ij}$, $\hat \tau = \mu^{\frac{1}{2}} \tau$.
\end{rem}

Method B and some of its variants were discussed in \cite{De}. System (\ref{MB}) possess the conformal covariance property in the following sense. Let $(\phi,W)$ solve system (\ref{MB}). Define $\omega = \omega(\sigma,\phi,W,g) := |\sigma+\phi^6 L_g W|_g$ and $P_{g,\omega} \phi := 8\Delta_g\phi - R_g \phi + \omega^2 \phi^{-7}$. Then the Lichnerowicz equation (\ref{hamiltonian2}) can be written as
\begin{equation}
\phi^{-5}P_{g,\omega}\phi=\frac{2}{3}\tau^2,
\end{equation}
and the vector equation (\ref{momentum2}) becomes
\begin{equation}
\Delta_{g,\phi}W=\frac{2}{3}d\tau,
\end{equation}
where $\Delta_{g,\phi} W = \phi^{-6} \mathrm{div}_g (\phi^6 L_g W)$. Now we make the following conformal change:
\begin{equation*}
\widetilde{g}=\psi^4g, \quad \widetilde{\phi}=\psi^{-1}\phi, \quad \widetilde{\sigma}=\psi^{-2}\sigma, \quad \widetilde{W}=\psi^4 W,
\end{equation*}
where $\psi$ is any positive function. It is easy to show that $\widetilde{\sigma}$ is still TT with respect to the metric $\widetilde{g}$, $\widetilde{\omega} = \widetilde{\omega} (\widetilde{\sigma},\widetilde{\phi},\widetilde{W},\widetilde{g}) = \psi^{-6} \omega$, and $\Delta_{\widetilde{g},\widetilde{\phi}} \widetilde{W} = \Delta_{g,\phi}W$. Then the operator given by
\begin{equation*}
\mathcal{P}_g \begin{pmatrix} \phi \\ W \\ \end{pmatrix}
:= \begin{pmatrix} \phi^{-5}P_{g,\omega} \phi \\ \Delta_{g,\phi} W \end{pmatrix}
\end{equation*}
is conformally covariant, i.e.,
\begin{equation}
\mathcal{P}_{\widetilde{g}} \begin{pmatrix} \widetilde{\phi} \\ \widetilde{W} \end{pmatrix}
= \mathcal{P}_{g} \begin{pmatrix} \phi \\ W \end{pmatrix}
= \frac{2}{3} \begin{pmatrix} \tau^2 \\ d\tau \end{pmatrix}.
\end{equation}

If $\tau$ is constant, Eqs.\ (\ref{SD}) split in a natural way. In this case, we have $W\equiv 0$, and we are only left with the well-studied Lichnerowicz equation. Much less is mathematically known about the conformally covariant split system, although it was applied in certain studies by numerical relativists \cite{Co00}. The proof or disproof of the existence of solutions of system (\ref{MB}) is difficult. When $\tau$ is constant, $L_g W \equiv 0$ is of course a trivial solution of \eqref{momentum2}, but even in this case the existence of solutions with non-zero $L_g W$ is unclear.

Solutions of systems (\ref{SD}) (standard conformal method) and (\ref{MB}) (conformally covariant split system) are, of course, related. Suppose that system (\ref{SD}) has a solution $(\phi,W)$ for the assumed data $(M,g)$, $\sigma$, and $\tau$. Suppose further that $\hat W$ is a solution to the equation
\begin{equation}
\label{bothsystemsrelated}
L_g \hat W = \phi^{-6} L_g W.
\end{equation}
It can be easily checked that the pair $(\phi, \hat W)$ solves system (\ref{MB}) with the same assumed data $(M,g)$, $\sigma$, and $\tau$. Moreover, both solutions lead to the same $K$, and hence to the same initial data $(M, \tilde g, K)$. The subtlety of the relation between $W$ and $\hat W$, and hence between systems (\ref{SD}) and (\ref{MB}), is due to the fact that $L$ is not, in general, invertible. However, computing the divergence of both sides of Eq.\ (\ref{bothsystemsrelated}) we obtain
\[ \Delta_L \hat W = \mathrm{div}_g \left( \phi^{-6} L_g W \right). \]
Note that $L_g^* = - 2 \, \mathrm{div}_g$ is a formal adjoint of $L_g$ with respect to $L^2$ product. Thus $\Delta_L = - \frac{1}{2} L_g^\ast L_g$. In explicit terms
\[ \Delta_L W_j = \nabla^i (L_g W)_{ij} = \nabla^i \nabla_i W_j + \frac{1}{3} \nabla_j (\nabla^i W_i) + W_i R^i_j, \]
where $R^i_j$ are the components of the Ricci tensor. When $(M,g)$ has no conformal Killing vector fields, the vector Laplacian $\Delta_L = - \frac{1}{2} L_g^\ast L_g$ is bijective between certain Sobolev spaces. In this case, one can solve (\ref{bothsystemsrelated}) to obtain
\begin{equation*}
\hat W = \Delta_L^{-1}\left[ \mathrm{div}_g \left( \phi^{-6}L_g W \right) \right], \quad \mbox{or equivalently} \quad W = \Delta_L^{-1} \left[ \mathrm{div}_g \left( \phi^{6} L_g \hat W \right) \right].
\end{equation*}

Suppose that we already have vacuum initial data $(M,g,K)$ such that $\bar \tau = \tr_g K$ is constant. In this case the traceless part of $K$, $\bar\si_{ij} = K_{ij} - \frac{\tr_g K}{3}g_{ij}$ is divergence free, and  system \eqref{MB} admits a particular solution $(\bar\phi \equiv1, \bar W \equiv 0)$ for the data $(\sigma, \tau) = (\bar \sigma, \bar \tau)$. This obvious solution can be understood as transforming the seed data $(M,g,K)$ into itself. In subsequent sections, we use the implicit function theorem to deduce existence of other solutions of Eqs.\ (\ref{MB}) with $\tau \neq \tr_g K$.

The order of this paper is as follows. In Section \ref{closed} we prove existence of solutions of system (\ref{MB}) on closed manifolds $M$, but admitting a non-constant $\tau$. In Section \ref{boundarycase} we obtain similar results for a compact manifold with a boundary. The assumed boundary conditions guarantee that this boundary constitutes an apparent horizon in the obtained initial data. To a large extent, the results of Sec.\ \ref{closed} and \ref{boundarycase} constitute counterparts to the analysis given in \cite{GN14} for system (\ref{SD}) (Method A). In Section \ref{withlambda} we take into account a cosmological constant and reformulate main theorems obtained in Sec.\ \ref{closed}. Finally, Sec.\ \ref{numerics} provides simple numerical examples of solutions of system (\ref{MB}) on compact manifolds, assuming non-constant $\tau$. In these examples, we assume the manifold $M = \mathbb S^1 \times \mathbb S^2$ or $M = \mathbb S^1 \times \mathbb T^2$.

We use the standard geometric notation in this paper. The product of two symmetric tensors of type (0,2) with respect to metric $g$ is expressed, in terms of their tensor components, as $\langle a, b \rangle_g = g^{ij} g^{kl} a_{jl} b_{ik}$. The square of the norm is denoted as $|a|^2_g = \langle a, a \rangle_g$. We omit the subscript referring to the metric, when the metric is obvious from the context. In most cases the metric is understood to be that of the seed data. This also applies to the standard notation of the trace, or the divergence of a tensor, as well as the scalar curvature $R_g$ and the conformal Killing operator $L_g$.

Unless otherwise stated, all the given data on the manifold $M$ are assumed to be smooth. We make use of the standard notation $W^{k,p}$ to denote the Sobolev space of functions defined on the manifold $M$ and $W^{k,p}_+$ to denote the subset consisting of all positive $W^{k,p}$ functions. In Section \ref{boundarycase} the manifold $(M,g)$ is assumed to be compact with boundary. To distinguish the functional spaces on the manifold $M$ and those on the boundary $\partial M$, we use the notation  $W^{k,p}(M)$ and $W^{k,p}(\partial M)$ respectively.

%%%%%%%%%%%%%%%%%%%%%%%%%%%%%%%%%%%%%%%%%%%%%%%%%%%%%%%%%%%%%%%%%%%%%%%%%%%%%%%%
%%%%%%%%%%%%%%%%%%%%%%%%%%%%%%%%%%%%%%%%%%%%%%%%%%%%%%%%%%%%%%%%%%%%%%%%%%%%%%%%

\section{Conformally covariant split system on a closed manifold}
\label{closed}

In this section, we assume that $(M,g)$ is a closed 3-dimensional Riemannian manifold. Making use of the implicit function theorem, we construct a family of solutions of the conformally covariant split system \eqref{MB} on $M$. These solutions give rise to vacuum initial data.

\begin{thm}\label{3}
Suppose that we already have vacuum initial data $(M,g,K)$. Assume that $\bar \tau = \tr_g K = \mathrm{const}$, and that $K \neq 0$ in some region of $M$. Assume further that $(M,g)$ has no conformal Killing vector fields. Then there is a small neighborhood of $\bar \tau$ in $W^{1,p}$ such that for any $\tau$ in this neighborhood there exists $(\phi_\tau, W_\tau) \in W^{2,p}_+\times W^{2,p}$ solving the system \eqref{MB} for the data $\bar\si_{ij} = K_{ij} - \frac{\bar \tau}{3} g_{ij}$ and $\tau$.
\end{thm}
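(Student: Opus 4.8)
\emph{Proof strategy.} The plan is to apply the implicit function theorem in Banach spaces with $\tau$ as the parameter, linearizing the conformally covariant split system about the distinguished solution $(\bar\phi,\bar W)=(1,0)$ which, as noted just above, corresponds to transforming the given data $(M,g,K)$ into itself. Since $p>3$, the space $W^{1,p}(M)$ is a Banach algebra, $W^{2,p}(M)\hookrightarrow C^1(M)$, and multiplication $W^{2,p}\times W^{2,p}\to W^{2,p}$ is continuous; hence every term on the left-hand sides of \eqref{hamiltonian2}--\eqref{momentum2} — the powers $\phi^{-7},\phi^{-1},\phi^{5}$, the products $\langle\bar\sigma,L_gW\rangle_g$ and $|L_gW|_g^2$, and $(L_gW)^i{}_j\nabla_i\log\phi$ — is a $C^\infty$ function of $(\phi,W)\in W^{2,p}_+\times W^{2,p}$ with values in $L^p(M)$. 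This yields a $C^\infty$ map
\[
\mathcal F:\; W^{2,p}_+(M)\times W^{2,p}(M)\times W^{1,p}(M)\;\longrightarrow\; L^p(M)\times L^p(M),
\]
where $\mathcal F(\phi,W,\tau)$ is the pair of left-hand sides of \eqref{hamiltonian2}--\eqref{momentum2} with $\sigma=\bar\sigma$. By the discussion preceding the theorem $\bar\sigma$ is TT and $\mathcal F(1,0,\bar\tau)=0$; concretely, the Hamiltonian constraint $R_g-|K|_g^2+\bar\tau^2=0$ rearranges to $-\tfrac18R_g+\tfrac18|\bar\sigma|_g^2-\tfrac1{12}\bar\tau^2=0$ and the vector equation degenerates to $-\tfrac23\,d\bar\tau=0$.

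The core step is to show that $D_{(\phi,W)}\mathcal F(1,0,\bar\tau):W^{2,p}\times W^{2,p}\to L^p\times L^p$ is a linear isomorphism. Writing $\phi=1+f$, $W=V$ and retaining only first-order terms — using $R_g=|\bar\sigma|_g^2-\tfrac23\bar\tau^2$ to simplify the zeroth-order coefficient of the $\phi$-block, and noting that every $\phi$-dependent term of \eqref{momentum2} carries a factor $L_gW$ which vanishes at $W=0$ — I expect the linearization to take the block-triangular form
\[
D_{(\phi,W)}\mathcal F(1,0,\bar\tau)(f,V)=\Bigl(\;\Delta f-|K|_g^2\,f+\tfrac14\langle\bar\sigma,L_gV\rangle_g,\;\;\Delta_L V\;\Bigr).
\]
Here $\Delta_L=-\tfrac12 L_g^\ast L_g:W^{2,p}\to L^p$ is self-adjoint, Fredholm of index zero, with kernel the space of conformal Killing one-forms, which is trivial by hypothesis, so $\Delta_L$ is an isomorphism. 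Similarly $\Delta-|K|_g^2:W^{2,p}\to L^p$ is self-adjoint and Fredholm of index zero, and it is injective: if $\Delta f=|K|_g^2 f$, integrating by parts gives $\int_M|\nabla f|_g^2+\int_M|K|_g^2 f^2=0$, forcing $f$ constant and then $f\equiv 0$ since $|K|_g^2\not\equiv 0$ — this is where the assumption that $K\neq 0$ on some region of $M$ enters. An upper block-triangular operator with invertible diagonal blocks is invertible (one solves for $V$ from the second block first, then the scalar equation for $f$), so $D_{(\phi,W)}\mathcal F(1,0,\bar\tau)$ is an isomorphism.

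Finally I would invoke the implicit function theorem: there is a neighborhood $\mathcal U$ of $\bar\tau$ in $W^{1,p}(M)$ and a $C^\infty$ map $\mathcal U\ni\tau\mapsto(\phi_\tau,W_\tau)\in W^{2,p}_+(M)\times W^{2,p}(M)$ with $(\phi_{\bar\tau},W_{\bar\tau})=(1,0)$ and $\mathcal F(\phi_\tau,W_\tau,\tau)=0$; shrinking $\mathcal U$ keeps $\phi_\tau$ positive since $W^{2,p}(M)\hookrightarrow C^0(M)$ and $\phi_{\bar\tau}\equiv 1$. This is the asserted family of solutions. I expect the genuine work to lie entirely in the linearization — checking that the zeroth-order coefficient of the $\phi$-block collapses exactly to $-|K|_g^2$ and that all $\phi$-dependence drops out of the vector block — plus the routine verification of the mapping and smoothness properties of $\mathcal F$ for $p>3$; the spectral facts used (bijectivity of $\Delta_L$ in the absence of conformal Killing fields, and of $\Delta-|K|_g^2$) are standard.
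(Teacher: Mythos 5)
Your proposal is correct and follows essentially the same route as the paper: linearize at $(\bar\phi,\bar W)=(1,0)$, observe the block-triangular structure, use the background Hamiltonian constraint to collapse the scalar block's zeroth-order coefficient to $-|K|_g^2$ (which is injective, hence invertible as a Fredholm index-zero operator, by $K\not\equiv 0$), use the absence of conformal Killing fields for $\Delta_L$, and conclude by the implicit function theorem. The only differences are cosmetic (you additionally spell out the Banach-algebra/smoothness checks and the persistence of positivity of $\phi_\tau$, which the paper leaves implicit).
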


\begin{proof}
The proof is based on the implicit function theorem and the ideas are borrowed from \cite{Ch04} and \cite{GN14}. First, let us define the operator
\begin{align*}
\MoveEqLeft \mathcal{F} \colon W^{1,p} \times W^{2,p}_+\times W^{2,p} \rightarrow L^p\times L^p, \\
& \begin{pmatrix} \tau\\ \phi\\ W \end{pmatrix} \mapsto \begin{pmatrix}
\Delta \phi -\frac{1}{8}R \phi + \frac{1}{8}|\bar\si|^2\phi^{-7}+\frac{1}{4} \langle \bar\si, LW \rangle \phi^{-1} - \left( \frac{1}{12}\tau^2-\frac{1}{8}|LW|^2 \right) \phi^{5} \\
\nabla_i(LW)^i_j-\frac{2}{3}
\nabla_j \tau + 6(LW)^i_j\nabla_i\log \phi
\end{pmatrix}.
\end{align*}
It is easy to see that $\mathcal{F}$ is a $C^1$-mapping and $\mathcal{F}(\bar\tau,\bar \phi\equiv1,\bar W\equiv0)=(0,0)$. We prove that the partial derivative of $\mathcal{F}$ with respect to the variables $(\phi,W)$ is an isomorphism at $(\bar\tau,\bar \phi\equiv1,\bar W\equiv0)$. The differential at the point $(\bar\tau,\bar \phi\equiv1,\bar W\equiv0)$ is given by
\begin{align*}
\MoveEqLeft D\mathcal{F}|_{(\bar\tau,1,0)} \begin{pmatrix} \delta\phi \\ \delta W \end{pmatrix} \\
&= \begin{pmatrix}
\Delta-\frac{1}{8}R-\frac{7}{8}|\bar\sigma|^2-\frac{5}{12}\bar\tau^2 & , & \frac{1}{4} \langle \bar\sigma, L(\cdot) \rangle \\
 0 & , & \Delta_L
\end{pmatrix}
\begin{pmatrix}\delta\phi \\ \delta W\end{pmatrix},
\end{align*}
and it is triangular, meaning that the second row of the above $2 \times 2$ block matrix does not depend on $\delta \phi$. Thus, the invertibility of $D\mathcal{F}|_{(\bar\tau,1,0)}$ follows from the fact that the diagonal terms are invertible. More specifically:

\textit{Claim 1.}
\begin{align*}
\MoveEqLeft \mathcal{H} \colon W^{2,p} \rightarrow L^p, \\
& \delta\phi \mapsto (\Delta-\frac{1}{8}R-\frac{7}{8}|\bar\sigma|^2-\frac{5}{12}\bar\tau^2 )\delta \phi
\end{align*}
is invertible and

\textit{Claim 2.}
\begin{align*}
\MoveEqLeft \Delta_L \colon W^{2,p} \rightarrow L^p, \\
& \delta W \mapsto \Delta_L \delta W
\end{align*}
is also invertible.

The proof of Claim 2 is a consequence of the assumption that $(M,g)$ is closed and has no conformal Killing vector fields. The proof of Claim 1 is as follows. Note that $\mathcal{H}$ is a Fredholm operator of index $0$. It suffices to show that $\mathcal{H}$ is injective. Since $(\bar \phi \equiv 1, \bar W \equiv0)$ solves the system \eqref{MB} with the data $\bar \tau$ and $\bar \sigma$, one has
\[ -\frac{1}{8}R +\frac{1}{8}|\bar \sigma|^2-\frac{1}{12}\bar\tau^2=0. \]
Hence,
\[ \Delta-\frac{1}{8}R-\frac{7}{8}|\bar\sigma|^2-\frac{5}{12}\bar\tau^2= \Delta -|\bar\sigma|^2-\frac{1}{3}\bar\tau^2=\Delta-|K|^2. \]
Clearly, it is a negatively definite operator.

Finally, the theorem follows from the implicit function theorem.
\end{proof}

\begin{rem} For $K\equiv 0$, one can set $W\equiv0$, and the system \eqref{MB} reduces the Yamabe problem \cite{trudinger,Aubin,schoen84}.
\end{rem}

Note that the scaling symmetry discussed in Remark \ref{scaling} can be used to produce new solutions from the already obtained ones. In particular, one can obtain solutions with $\tau$ deviating from the vicinity of $\bar \tau$, at a cost of rescaling $\bar \sigma$. When the seed solution $(M, g, K)$ is a maximal slice, one can also produce new non-CMC initial data with the following scaling argument.

\begin{thm}\label{exi}
Suppose that we already have vacuum initial data $(M,g,K)$ with $\tr_gK=0$. Suppose $K\neq 0$ for some region. Assume further that $(M,g)$ has no conformal Killing vector fields. Given any $\tau \in W^{1,p}$, there is a positive constant $\eta>0$ such that for any $ \mu\in(0,\eta)$, there exists at least one solution $(\phi,W) \in W^{2,p}_+\times W^{2,p}$ of system \eqref{MB} for the data $(\hat\si=\mu^{12}K, \hat\tau=\mu^{-1}\tau)$.
\end{thm}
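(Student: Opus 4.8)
The plan is to deduce Theorem \ref{exi} from Theorem \ref{3} combined with the scaling symmetry of Remark \ref{scaling}; the only real content is the bookkeeping of exponents. First I would note that since $\tr_g K = 0$ the tensor $K$ is traceless, and the momentum constraint for the vacuum data $(M,g,K)$ reduces to $\divergence_g K = 0$, so $K$ is a TT tensor with respect to $g$. Taking $\bar\tau = 0$ in Theorem \ref{3}, we have $\bar\sigma_{ij} = K_{ij} - \frac{\bar\tau}{3} g_{ij} = K_{ij}$, and the hypotheses of Theorem \ref{3} ($\bar\tau$ constant, $K \neq 0$ in some region, no conformal Killing vector fields) are precisely our assumptions. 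Thus Theorem \ref{3} provides a neighborhood $U$ of $0$ in $W^{1,p}$ such that for every $\tau' \in U$ the system \eqref{MB} with data $(\bar\sigma,\tau') = (K,\tau')$ has a solution $(\phi_{\tau'}, W_{\tau'}) \in W^{2,p}_+ \times W^{2,p}$.

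Next, given $\tau \in W^{1,p}$, I would observe that $\|\mu^5\tau\|_{W^{1,p}} = \mu^5 \|\tau\|_{W^{1,p}} \to 0$ as $\mu \to 0^+$, so there is $\eta > 0$ with $\mu^5\tau \in U$ whenever $\mu \in (0,\eta)$. For such $\mu$, Theorem \ref{3} yields a solution $(\phi_{\mu^5\tau}, W_{\mu^5\tau})$ of \eqref{MB} for the data $(K, \mu^5\tau)$. I would then apply the scaling of Remark \ref{scaling} to this solution with scaling parameter $\nu = \mu^{-12} \in \mathbb R^+$: by the remark, $(\phi, W) := (\nu^{-1/4}\phi_{\mu^5\tau}, \nu^{1/2} W_{\mu^5\tau}) = (\mu^3 \phi_{\mu^5\tau}, \mu^{-6} W_{\mu^5\tau})$ solves \eqref{MB} with data
\[ \hat\sigma_{ij} = \nu^{-1} K_{ij} = \mu^{12} K_{ij}, \qquad \hat\tau = \nu^{1/2}(\mu^5\tau) = \mu^{-6}\cdot\mu^5\tau = \mu^{-1}\tau, \]
which is exactly the data in the statement. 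Since $\phi_{\mu^5\tau}$ is positive, so is $\phi = \mu^3 \phi_{\mu^5\tau}$, and $(\phi, W) \in W^{2,p}_+ \times W^{2,p}$, which finishes the argument.

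I do not expect a genuine obstacle here: given Theorem \ref{3} and Remark \ref{scaling}, the whole point is to pick the scaling exponent so that the $\sigma$-component lands on $\mu^{12}K$ (forcing $\nu = \mu^{-12}$), and then to check that the induced base datum $\tau' = \mu^5\tau$ still tends to $0$ -- which it does, with room to spare, since the exponent $5$ is positive. The single delicate point worth a word is that $\nu = \mu^{-12} \to \infty$ as $\mu \to 0^+$; but Remark \ref{scaling} holds for every positive $\nu$, so this is harmless.
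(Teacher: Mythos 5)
Your argument is correct, and it reaches the conclusion by a genuinely different (and shorter) route than the paper. The paper does not deduce Theorem \ref{exi} from Theorem \ref{3}: instead it introduces a bespoke $\mu$-deformed map
$\mathcal G(\mu,\phi,W)$, in which the terms $\langle K, LW\rangle$, $\tau^2$, $|LW|^2$ and $\nabla\tau$ carry the explicit weights $\mu^4$, $\mu^{10}$, $\mu^{8}$ and $\mu$, verifies $\mathcal G(0,1,0)=0$, checks that $D_{(\phi,W)}\mathcal G|_{(0,1,0)}$ is the block-diagonal operator $(\Delta-|K|^2,\ \Delta_L)$ and hence invertible, applies the implicit function theorem to get $(\phi_\mu,W_\mu)$, and only then rescales via $\hat\phi_\mu=\mu^3\phi_\mu$, $\hat W_\mu=\mu^{-2}W_\mu$ to land on the data $(\mu^{12}K,\mu^{-1}\tau)$. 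You instead invoke Theorem \ref{3} at $\bar\tau=0$ to get solutions of \eqref{MB} for the data $(K,\tau')$ with $\tau'$ in a fixed neighborhood of $0$, feed in $\tau'=\mu^5\tau$ (which lies in that neighborhood for small $\mu$ since $\|\mu^5\tau\|_{W^{1,p}}\to 0$), and then apply the scaling of Remark \ref{scaling} with parameter $\nu=\mu^{-12}$; the exponent bookkeeping $(\nu^{-1}K,\nu^{1/2}\mu^5\tau)=(\mu^{12}K,\mu^{-1}\tau)$ and $(\nu^{-1/4},\nu^{1/2})=(\mu^3,\mu^{-6})$ is right, and the positivity of $\hat\phi$ is preserved. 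What your route buys is economy: the invertibility of the linearization is inherited from Theorem \ref{3} rather than re-verified, and Theorem \ref{exi} becomes a literal corollary of Theorem \ref{3} plus the scaling symmetry. What the paper's route buys is a self-contained template — an explicit $C^1$ curve $\mu\mapsto(\phi_\mu,W_\mu)$ through the CMC seed — that transplants directly to the settings with boundary terms and with a cosmological constant (Theorems \ref{3-mu}, \ref{exi_lambda_a}, \ref{exi_lambda_b}), where the extra data ($\Theta_-$, $\xi$, $\Lambda$) acquire their own $\mu$-weights and a pure corollary-of-Theorem-\ref{3} argument would not reproduce all the stated scalings (e.g.\ $\hat\Lambda=\mu^{-2}\Lambda$ in Theorem \ref{exi_lambda_a}).
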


\begin{proof}
Since $(M,g,K)$ constitute vacuum initial data, system \eqref{MB} admits a particular solution $(\bar\phi \equiv 1, \bar W \equiv 0)$ for $\bar \tau=0$ and $\bar \si=K$.

Let us consider the following $\mu$-deformed system corresponding to Eqs.\ \eqref{MB}:
\begin{align*}
\MoveEqLeft \mathcal{G} \colon \mathbb{R} \times W^{2,p}_+\times W^{2,p} \rightarrow L^p\times L^p, \\
& \begin{pmatrix} \mu \\ \phi\\ W \end{pmatrix} \mapsto \begin{pmatrix}
\Delta \phi -\frac{1}{8}R \phi + \frac{1}{8}|K|^2\phi^{-7}+\frac{1}{4}\mu^4 \langle K, LW \rangle \phi^{-1}- \left( \mu^{10}\frac{1}{12}\tau^2-\mu^{8}\frac{1}{8}|LW|^2 \right) \phi^{5} \\
\nabla_i(LW)^i_j-\frac{2}{3}
\mu\nabla_j\tau+6(LW)^i_j\nabla_i\log \phi
\end{pmatrix}.
\end{align*}
It is easy to see that $\mathcal{G}$ is a $C^1$-mapping. The condition that $(M,g,K)$ constitute vacuum initial data with $\tr_g K = 0$ implies that $\mathcal{G}(0,1,0)=(0,0)$. We now prove that the partial derivative of $\mathcal{G}$ with respect to the variables $(\phi,W)$ is an isomorphism at $(0,\bar \phi\equiv1,\bar W\equiv0)$. The differential at the point $(0, \bar \phi \equiv 1,\bar W \equiv 0)$ is given by
\begin{align*}
\MoveEqLeft D\mathcal{G}|_{(0,1,0)} \begin{pmatrix} \delta\phi \\ \delta W \end{pmatrix} \\
&= \begin{pmatrix}
\Delta-\frac{1}{8}R-\frac{7}{8}|K|^2 & , & 0\\
 0 & , & \Delta_L
\end{pmatrix}
\begin{pmatrix}\delta\phi \\ \delta W \end{pmatrix},
\end{align*}
where $\Delta_L W=\mbox{div}_g(LW)$. Since $(0, \bar \phi \equiv 1,\bar W \equiv0)$ solves system \eqref{MB}, one has
\[ \Delta-\frac{1}{8}R-\frac{7}{8}|K|^2= \Delta -|K|^2. \]
The invertibility of the derivative $D \mathcal G|_{(0,1,0)}$ follows from the arguments stated already in the proof of Theorem \ref{3}.

By the implicit function theorem, for a sufficiently small parameter $\mu$, there exists $(\phi_{\mu}, W_\mu)$ such that $\mathcal{G}(\mu,\phi_\mu,W_\mu)=0$.

Define $\hat \phi_\mu=\mu^3 \phi_\mu$ and $\hat W_\mu=\mu^{-2}W_\mu$. Direct calculations show that $(\hat\phi_\mu, \hat W_\mu)$ solves system \eqref{MB} for the rescaled data $(\hat\si=\mu^{12}K, \hat\tau=\mu^{-1}\tau)$.
\end{proof}

In order to convert the above theorems into existence results, one has to assert the existence of solutions with $\tau = \bar\tau = \mathrm{const}$. Fortunately, in this case $W \equiv 0$ is a solution of \eqref{momentum2}. We only need to consider equation \eqref{hamiltonian2} with $W\equiv0$:
\begin{equation}
\label{newlich}
\Delta \phi -\frac{1}{8}R \phi + \frac{1}{8}|\bar\si|^2\phi^{-7}-\frac{1}{12}\bar\tau^2\phi^{5} = 0.
\end{equation}

As a consequence of Theorem 4.10 in \cite{Ch04}, one has
\begin{thm}\label{cmc}
If $\bar\tau$ is a constant, there exists $(\phi_{\bar\tau},W_{\bar\tau}\equiv0)$ solving equation \eqref{newlich}, provided that one of the following conditions holds:
\begin{enumerate}
\item The Yamabe constant $Y(g) > 0$ and $|\bar\si|^2 > 0$
\item $\bar\tau\neq 0$, $Y(g)=0$ and $|\bar\si|^2 > 0$
\item $\bar\tau\neq 0$ and $Y(g) < 0$
\end{enumerate}
\end{thm}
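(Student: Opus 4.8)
The plan is to observe first that, since $\tau\equiv\bar\tau$ is constant, $W\equiv 0$ solves the vector equation \eqref{momentum2} identically, so that the conformally covariant split system \eqref{MB} degenerates to the single scalar equation \eqref{newlich} for $\phi$; this is exactly the vacuum Hamiltonian constraint of the CMC conformal method with seed $(\bar\sigma,\bar\tau)$, so Theorem \ref{cmc} is the existence half of the classical CMC classification and follows from \cite[Theorem 4.10]{Ch04}. To make clear where the three hypotheses enter, I would run the following self-contained argument.

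Using the conformal covariance of \eqref{newlich} and the resolution of the Yamabe problem \cite{trudinger,Aubin,schoen84}, I would first reduce to the case in which $R_g$ is a constant carrying the sign of $Y(g)$ (the conformal change leaves $\bar\tau$ unchanged and keeps $|\bar\sigma|^2$ positive somewhere). I would then solve \eqref{newlich} by minimizing
\[
J[\phi]=\int_M\Bigl(8|\nabla\phi|^2+R_g\phi^2+\tfrac13|\bar\sigma|^2\phi^{-6}+\tfrac29\bar\tau^2\phi^6\Bigr)\,\ud V_g
\]
over positive $W^{1,2}$ functions, a multiple of \eqref{newlich} being its Euler--Lagrange equation. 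The hypotheses are tailored to the two required facts. First, $J$ is bounded below with $W^{1,2}$-bounded minimizing sequences: for $Y(g)>0$ the conformal Laplacian $-8\Delta+R_g$ is coercive on $W^{1,2}$; for $Y(g)=0$ coercivity comes from the $\bar\tau^2\phi^6$ term (hence $\bar\tau\neq 0$); for $Y(g)<0$ the indefinite term $R_g\phi^2$ is absorbed into $8|\nabla\phi|^2+\bar\tau^2\phi^6$ via Young's inequality (again $\bar\tau\neq 0$). Second, a minimizer $\phi_\infty\ge 0$ exists by weak lower semicontinuity --- the critical term $\int\phi^6$ is harmless, since it enters with a favourable sign and the minimization is unconstrained, so no concentration analysis is needed, while $\int|\bar\sigma|^2\phi^{-6}$ is lower semicontinuous by Fatou --- and $\phi_\infty\not\equiv 0$ because $J[0]=+\infty$ whenever $|\bar\sigma|^2\not\equiv 0$ (cases (1) and (2)) and $J[\varepsilon]<0=J[0]$ for small constants $\varepsilon$ when $R_g<0$ (case (3)). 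Rewriting the Euler--Lagrange equation as $\Delta\phi_\infty-V^{+}\phi_\infty\le 0$ with $V^{+}\ge 0$ bounded, the strong maximum principle promotes $\phi_\infty\ge 0$, $\phi_\infty\not\equiv 0$ to $\phi_\infty>0$ everywhere, and elliptic bootstrapping (a Brezis--Kato step for the $\phi^5$ term, then $L^p$ theory) gives $\phi_\infty\in W^{2,p}_+$; together with $W_{\bar\tau}\equiv 0$ this solves \eqref{MB}. In cases (2) and (3) one may instead use sub- and super-solutions: a large constant is a supersolution throughout, a small constant is a subsolution when $Y(g)<0$, and $\phi_-=\delta v$ with $\delta$ small and $v>0$ solving $\Delta v=\overline{|\bar\sigma|^2}-|\bar\sigma|^2$ (the bar denoting the mean over $M$) is a subsolution when $Y(g)=0$.

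The step I expect to be the crux is the positivity of the minimizer --- equivalently, the construction of a global positive subsolution --- in the positive Yamabe case: a small constant fails as a subsolution wherever $\bar\sigma$ vanishes, and when $\bar\tau\neq 0$ the $\bar\tau^2\phi^5$ term obstructs the naive fixes, so one genuinely needs the barrier furnished by the $|\bar\sigma|^2\phi^{-6}$ term (which makes $\phi\equiv 0$ infeasible) together with a maximum-principle/unique-continuation argument; this is precisely the place where $|\bar\sigma|^2\not\equiv 0$ is used. The remaining ingredients --- the conformal normalization, the coercivity estimates, weak lower semicontinuity, and the regularity bootstrap for the $\phi^{-7}$ nonlinearity --- are routine.
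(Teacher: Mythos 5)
Your proof takes the same route as the paper, whose entire argument consists of noting that $W\equiv 0$ solves \eqref{momentum2} when $\tau$ is constant and then citing Theorem 4.10 of \cite{Ch04} for the resulting Lichnerowicz equation \eqref{newlich} --- exactly your opening paragraph. The supplementary variational sketch goes beyond what the paper provides and is broadly sound, but note that its genuine delicacy is slightly sharper than you state: one needs a uniform positive lower bound on the minimizer \emph{before} the Euler--Lagrange equation with its $\phi^{-7}$ term can legitimately be derived (not merely $\phi_\infty>0$ a.e.\ from finiteness of $J$), which is precisely why the standard treatment --- constant sub- and super-solutions after the Yamabe normalization of $R_g$, using $|\bar\sigma|^2>0$ everywhere in cases (1) and (2) and $\bar\tau\neq 0$ in cases (2) and (3) --- settles all three cases more directly.
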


Here the Yamabe constant is defined as
\[ Y(g) = \mathrm{inf}_{u \in C^\infty(M), u \not\equiv 0} \frac{\int_M \left( |\nabla u|^2 + \frac{1}{8} R u^2 \right)dv_g}{\left( \int_M u^6 dv_g \right)^\frac{1}{3}}. \]
The set $\left(M,\phi_{\bar\tau}^4g, \frac{\bar \tau}{3} \phi_{\bar \tau}^4 g + \phi_{\bar\tau}^{-2}\bar\sigma, \bar \tau \right)$ can be regarded as the assumed CMC vacuum initial data in Theorem \ref{3} and, if $\bar \tau \equiv 0$, also in Theorem \ref{exi}.

%%%%%%%%%%%%%%%%%%%%%%%%%%%%%%%%%%%%%%%%%%%%%%%%%%%%%%%%%%%%%%%%%%%%%%%%%%%%%%%%
%%%%%%%%%%%%%%%%%%%%%%%%%%%%%%%%%%%%%%%%%%%%%%%%%%%%%%%%%%%%%%%%%%%%%%%%%%%%%%%%

\section{Conformally covariant split system on a compact manifold with boundary}
\label{boundarycase}

We will now turn to a physically important case, where the manifold $M$ is compact and has a black hole type boundary. Since we are working at the level of initial data, basically the only available concept of a black hole is that of a region enclosed within an apparent horizon. Consequently, the boundary conditions that we now adopt guarantee that the boundary of $M$ is a marginally trapped surface (and that the manifold $M$ constitutes the `exterior' of the black hole). It is however not surprising that requiring the boundary of $M$ to correspond to an apparent horizon is not sufficient as a prescription of boundary conditions for system (\ref{MB}). Consequently, a part of the boundary conditions have to be imposed in a more or less arbitrary manner. In this work we follow a recipe proposed in \cite{GN14}.

Let $(M,\widetilde g)$ be a compact 3-dimensional manifold with the boundary $\partial M$, and let $\widetilde \nu$ be the unit vector normal to $\partial M$. We assume that $\widetilde \nu$ is pointing `outwards' of $M$, and therefore to the `inside' of the black hole. The two null expansions of $\partial M$ are given by
\[ \Theta_\pm = \mp H_{\widetilde g} - K(\widetilde \nu, \widetilde \nu) + \tr_{\widetilde g} K, \]
where $H_{\widetilde g} = \widetilde \nabla_i \widetilde \nu^i$, and $\widetilde \nabla_i$ denotes the covariant derivative with respect to metric $\tilde g$. The condition that $\partial M$ is a marginally trapped surface can be stated as $\Theta_+ = 0$, $\Theta_- \le 0$. Let us further observe that $\frac{1}{2}\left( \Theta_- + \Theta_+ \right) = - K(\widetilde \nu, \widetilde \nu)  + \tr_{\widetilde g} K$, and $\frac{1}{2}\left( \Theta_- - \Theta_+ \right) = H_{\widetilde g}$. Consequently, the condition $\Theta_+ = 0$ yields
\begin{align}
\frac{1}{2}\Theta_- &= - K(\widetilde \nu, \widetilde \nu) + \tr_{\widetilde g} K, \label{thetam} \\
\frac{1}{2}\Theta_- &= H_{\widetilde g}. \label{thetap}
\end{align}
If the metric $\widetilde g$ and the extrinsic curvature $K$ are obtained as $\widetilde g = \phi^4 g$ and $K = \frac{\tau}{3} \phi^{4} g + \phi^{-2} \si + \phi^{4} LW$, the above conditions can be also expressed in terms of quantities related directly to $(M,g)$. A vector $\nu$, normal to $\partial M$, and normalized with respect to metric $g$, is related with $\widetilde \nu$ by $\nu^i = \phi^2 \widetilde \nu^i$. Also
\[ H_{\widetilde g} = \phi^{-2} H_g + 4 \phi^{-3} \partial_\nu \phi, \]
where $H_g = \nabla_i \nu^i$ is the mean curvature of $\partial M$ with respect to the metric $g$. Similarly,
\[ K(\widetilde \nu, \widetilde \nu)  = \frac{1}{3} \tau + \phi^{-6} \sigma(\nu,\nu) + LW(\nu,\nu). \]
Conditions (\ref{thetam}) and (\ref{thetap}) can be now rewritten as
\begin{equation}
\label{prototype}
\phi^{-6} \sigma(\nu,\nu) + LW(\nu,\nu) - \frac{2}{3} \tau + \frac{1}{2}\Theta_- = 0
\end{equation}
and
\[ \partial_\nu \phi + \frac{1}{4}H_g \phi - \frac{\Theta_-}{8} \phi^3 = 0. \]
Since Eq.\ (\ref{prototype}) is not sufficient as a boundary condition for $W$, we will actually replace it with a stronger requirement. Let $\xi$ denote a 1-form tangent to the boundary $\partial M$. We will require, as a boundary condition, that
\begin{equation}
\label{boundaryw}
\phi^{-6} \sigma(\nu,\cdot)  + LW(\nu,\cdot) - \frac{2}{3} \tau \nu^\flat + \frac{1}{2}\Theta_- \nu^\flat  - \xi = 0,
\end{equation}
where $\nu^\flat$ is the 1-form dual to the normal vector field $\nu$. Clearly, Eq.\ (\ref{prototype}) follows from Eq.\ (\ref{boundaryw}), as $\xi(\nu) = 0$.

In the remaining part of this section, we always assume that
\begin{equation}
\label{boundarysigma}
\si(\nu,\cdot)=0
\end{equation}
on $\partial M$. It is, of course, a restriction on the allowed forms of $\sigma$. On the other hand, it facilitates the proofs of the theorems of this section. In addition, assuming (\ref{boundarysigma}) ensures the $L^2$-orthogonality of $LW$ and $\sigma$. This can be easily demonstrated by a direct computation
\begin{align*}
\int_M \langle LW, \sigma \rangle dv_g &= 2 \int_M \sigma^{ij} \nabla_i W_j dv_g \\
&= 2 \int_{\partial M} \sigma_{ij} \nu^i W^j d \sigma_g = 0,
\end{align*}
where the boundary term vanishes because of assumption (\ref{boundarysigma}). Note that on compact manifolds without boundary, the $L^2$-orthogonality of $LW$ and $\sigma$ is guaranteed without any additional conditions.

In summary, we are now dealing with the following set of equations
\begin{subequations}\label{BMB}
\begin{align}
\Delta \phi -\frac{1}{8}R \phi + \frac{1}{8}|\si|^2\phi^{-7}+\frac{1}{4} \langle \si, LW \rangle \phi^{-1}- \left( \frac{1}{12}\tau^2-\frac{1}{8}|LW|^2 \right)\phi^{5} &= 0, \label{Bhamiltonian} \\
\nabla_i(LW)^i_j-\frac{2}{3} \nabla_j\tau+6(LW)^i_j\nabla_i\log \phi &= 0, \label{Bmomentum} \\
\partial_{\nu}\phi +\frac{1}{4}H\phi -\frac{\Theta_{-}}{8}\phi^3 &= 0, \label{Boundary1} \\
LW(\nu,.)-\frac{2}{3}\tau\nu^\flat+\frac{\Theta_{-}}{2}\nu^\flat-\xi &= 0, \label{Boundary2}
\end{align}
\end{subequations}
where \eqref{Boundary1} and \eqref{Boundary2} are the boundary conditions on $\partial M$. Here $g\in W^{2,p}(M)$, $\sigma \in W^{1,p}(M)$, $\tau \in W^{1,p}(M)$, $\Theta_- \in W^{1-\frac{1}{p},p}(\partial M)$, $\Theta_- \le 0$, and $\xi\in W^{1-\frac{1}{p},p}( \partial M)$ are the assumed data. In Equations (\ref{BMB}), and in the remaining part of this section, we drop, for convenience, the subscript $g$ in the symbol denoting the mean curvature of $\partial M$; we write simply $H_g \equiv H$.

Suppose that we already have vacuum initial data $(M,g,K)$, and denote $\bar \tau = \tr_g K$. Assume further that $\bar \tau$ is a constant. In this case the traceless part of $K$, $\bar\si_{ij}=K_{ij}-\frac{\bar\tau}{3}g_{ij}$ is divergence free. Then the pair $(\bar \phi \equiv 1, \bar W \equiv 0)$ is a solution of Eqs.\ (\ref{Bhamiltonian}) and (\ref{Bmomentum}). We will now require $(\bar \phi \equiv 1, \bar W \equiv 0)$ to also solve Eqs.\ (\ref{Boundary1}) and (\ref{Boundary2}). This implies that $\xi \equiv 0$ and $\Theta_- = 2 H = \frac{4}{3} \bar \tau = \mathrm{const}$.

Using the implicit function theorem, we can now assert the existence of a family of solutions to Eqs.\ (\ref{BMB}).

\begin{thm} Let $(M,g,K)$ be vacuum initial data with boundary $\partial M$ such that $\bar \tau = \tr_g K = \frac{3}{2} H = \mathrm{const} \le 0$, where $H$ denotes the mean curvature of $\partial M$. Let $\Theta_- = \frac{4}{3} \bar \tau$ and $\xi \equiv 0$ so that Eqs.\ (\ref{BMB}) admit a solution $(\bar \phi \equiv 1, \bar W \equiv 0)$. Assume further that $(M,g)$ has no conformal Killing vector fields, and $K \neq 0$ in some region of M. There is a small neighborhood of $\bar \tau$ in $W^{1,p}(M)$ such that for any $\tau$ in this neighborhood there exists a solution $(\phi_\tau,W_\tau) \in W^{2,p}_+(M)\times W^{2,p}(M)$ of system \eqref{BMB}.
\end{thm}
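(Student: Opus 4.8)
The plan is to replicate the implicit-function-theorem scheme from the proof of Theorem \ref{3}, enlarging the target space to carry the two boundary operators. I would work with the map
\[
\mathcal F \colon W^{1,p}(M) \times W^{2,p}_+(M) \times W^{2,p}(M) \longrightarrow L^p(M) \times L^p(M) \times W^{1-\frac1p,p}(\partial M) \times W^{1-\frac1p,p}(\partial M),
\]
whose four components are, in order, the left-hand sides of \eqref{Bhamiltonian}, \eqref{Bmomentum}, \eqref{Boundary1} and \eqref{Boundary2}, with $\sigma=\bar\sigma$, $\Theta_-=\tfrac43\bar\tau$ and $\xi\equiv0$ held fixed. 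Since $p>3$ makes $W^{1,p}(M)$ and $W^{2,p}(M)$ Banach algebras and the trace operators $W^{2,p}(M)\to W^{1-1/p,p}(\partial M)$ and $W^{1,p}(M)\to W^{1-1/p,p}(\partial M)$ are bounded, one checks that $\mathcal F$ is a well-defined $C^1$ map on the open set where $\phi>0$; and the discussion preceding the theorem shows that the choices $\Theta_-=\tfrac43\bar\tau$, $H=\tfrac23\bar\tau$, $\xi\equiv0$ are exactly what makes $\mathcal F(\bar\tau,\bar\phi\equiv1,\bar W\equiv0)=0$.

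Next I would compute $D\mathcal F|_{(\bar\tau,1,0)}$ in the $(\phi,W)$-directions. Because $\bar\phi\equiv1$ and $\bar W\equiv0$, the nonlinear coupling $6(LW)^i_j\nabla_i\log\phi$ in \eqref{Bmomentum} and the boundary term $LW(\nu,\cdot)$ in \eqref{Boundary2} contribute nothing in the $\delta\phi$-direction, so the differential is block triangular,
\[
D\mathcal F|_{(\bar\tau,1,0)}\begin{pmatrix}\delta\phi\\ \delta W\end{pmatrix}
=\begin{pmatrix}
(\Delta-|K|^2)\,\delta\phi\\
\Delta_L\,\delta W\\
\partial_\nu\delta\phi-\tfrac13\bar\tau\,\delta\phi\\
L\delta W(\nu,\cdot)
\end{pmatrix},
\]
where, as in the proof of Theorem \ref{3}, the fact that $(\bar\phi,\bar W)$ solves \eqref{BMB} gives $-\tfrac18R+\tfrac18|\bar\sigma|^2-\tfrac1{12}\bar\tau^2=0$, hence $\Delta-\tfrac18R-\tfrac78|\bar\sigma|^2-\tfrac5{12}\bar\tau^2=\Delta-|K|^2$, and the coefficient of $\delta\phi$ in the linearized Robin condition is $\tfrac14H-\tfrac38\Theta_-=\tfrac16\bar\tau-\tfrac12\bar\tau=-\tfrac13\bar\tau$. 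It then suffices to show that the two diagonal boundary-value problems
\[
\mathcal H\colon\delta\phi\mapsto\big((\Delta-|K|^2)\delta\phi,\ \partial_\nu\delta\phi-\tfrac13\bar\tau\,\delta\phi\big),\qquad
\mathcal L\colon\delta W\mapsto\big(\Delta_L\delta W,\ L\delta W(\nu,\cdot)\big)
\]
are isomorphisms $W^{2,p}(M)\to L^p(M)\times W^{1-1/p,p}(\partial M)$.

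Each is an elliptic boundary-value problem satisfying the complementing (Lopatinski--Shapiro) condition, hence Fredholm of index zero, so it is enough to prove injectivity. For $\mathcal H$, pairing $(\Delta-|K|^2)\delta\phi=0$ with $\delta\phi$ and integrating by parts gives $\int_M\big(|\nabla\delta\phi|^2+|K|^2\delta\phi^2\big)\,dv_g+\tfrac13|\bar\tau|\int_{\partial M}\delta\phi^2\,d\sigma_g=0$; this is the one place where $\bar\tau\le0$ enters, ensuring the boundary integral is nonnegative, which forces $\delta\phi$ constant and then $\delta\phi\equiv0$ since $K\not\equiv0$ somewhere. For $\mathcal L$, pairing $\Delta_L\delta W=0$ with $\delta W$ and integrating by parts, the boundary term $\int_{\partial M}\langle\delta W,L\delta W(\nu,\cdot)\rangle$ vanishes, leaving $\tfrac12\int_M|L\delta W|^2\,dv_g=0$, so $\delta W$ is a conformal Killing field and hence $\delta W\equiv0$ by hypothesis. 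Therefore $D\mathcal F|_{(\bar\tau,1,0)}$ is an isomorphism, and the implicit function theorem produces a neighborhood of $\bar\tau$ in $W^{1,p}(M)$ on which $\tau\mapsto(\phi_\tau,W_\tau)\in W^{2,p}_+(M)\times W^{2,p}(M)$ solves $\mathcal F(\tau,\phi_\tau,W_\tau)=0$; positivity of $\phi_\tau$ persists after shrinking the neighborhood because $\phi_{\bar\tau}\equiv1$ and $W^{2,p}(M)\hookrightarrow C^0(M)$.

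The one point requiring care, more technical than substantive, is the claim that these linearized problems are elliptic in the $W^{k,p}$-scale and Fredholm of index zero: for the scalar Robin problem this is classical, while for the vector Laplacian $\Delta_L$ with the boundary operator $\delta W\mapsto L\delta W(\nu,\cdot)$ one can quote the corresponding analysis in \cite{GN14}, where the same boundary value problem is treated for system \eqref{SD}. Apart from that, the argument is the verbatim analogue of the closed case, and the only extra hypothesis consumed is $\bar\tau\le0$, used exactly once in the energy identity for $\mathcal H$.
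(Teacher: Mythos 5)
Your proposal is correct and follows essentially the same route as the paper's proof: the same four-component map, the same block-triangular differential, the same reduction of the interior operator to $\Delta-|K|^2$, the same energy identity for the Robin problem (the paper writes the boundary term as $\tfrac14\Theta_-\le 0$, which equals your $\tfrac13\bar\tau$), and the same use of the no-conformal-Killing-field hypothesis for the vector block. The only difference is that you spell out the integration by parts for $\mathcal L$ and the Fredholm/Lopatinski--Shapiro justification, which the paper leaves implicit.
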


\begin{proof}
First, let us define a mapping
\begin{align*}
\MoveEqLeft \mathcal{F} \colon W^{1,p}(M) \times W^{2,p}_+(M)\times W^{2,p}(M) \rightarrow L^p(M)\times W^{1-\frac{1}{p},p}( \partial M) \times L^p (M) \times W^{1-\frac{1}{p},p}( \partial M), \\
& \begin{pmatrix} \tau\\ \phi\\ W \end{pmatrix} \mapsto \begin{pmatrix}
\Delta \phi -\frac{1}{8}R \phi + \frac{1}{8}|\bar \sigma|^2\phi^{-7}+\frac{1}{4} \langle \bar \sigma, LW \rangle \phi^{-1}- \left( \frac{1}{12}\tau^2-\frac{1}{8}| LW|^2 \right) \phi^{5} \\
\partial_{\nu}\phi +\frac{1}{4} H\phi -\frac{\Theta_{-}}{8}\phi^3 \\
\nabla_i(LW)^i_j-\frac{2}{3}\nabla_j\tau+6(LW)^i_j\nabla_i\log \phi \\
LW(\nu,.)-(\frac{2}{3}\tau-\frac{\Theta_{-}}{2})\nu^\flat
\end{pmatrix}.
\end{align*}
It is easy to see that $\mathcal{F}$ is a $C^1$-mapping and $\mathcal{F}(\bar{\tau},1,0)=(0,0,0,0)$. We prove that the partial derivative of $\mathcal{F}$ with respect to variables $(\phi,W)$ is an isomorphism at $(\bar{\tau},1,0)$. The differential at the point $(\bar{\tau},1,0)$ is given by
\begin{align*}
\MoveEqLeft D\mathcal{F}|_{(\bar{\tau},1,0)} \begin{pmatrix} \delta\phi \\ \delta W \end{pmatrix}\\
&= \begin{pmatrix}
\Delta-\frac{1}{8}R-\frac{7}{8}|\bar\sigma|^2-\frac{5}{12}\bar\tau^2 & , & \frac{1}{4} \langle \bar\sigma, L(\cdot) \rangle \\
\partial_{\nu}+\frac{1}{4} H-\frac{3\Theta_{-}}{8}  & , & 0\\
 0 & , & \Delta_L\\
 0 & , &L\cdot(\nu,\cdot)
\end{pmatrix}
\begin{pmatrix}\delta\phi \\ \delta W\end{pmatrix},
\end{align*}
and it is block triangular. Thus, the invertibility of $D\mathcal{F}|_{(\bar\tau,1,0)}$ follows from the fact that both diagonal block terms are invertible. More precisely:

\textit{Claim 1.} The map
\begin{align*}
\MoveEqLeft \mathcal{H}\colon W^{2,p}(M) \rightarrow L^p (M)\times W^{1-\frac{1}{p},p}( \partial M), \\
& \delta\phi \mapsto \left( \left(\Delta-\frac{1}{8}R-\frac{7}{8}|\bar\sigma|^2-\frac{5}{12}\bar\tau^2 \right)\delta \phi, \left(\partial_{\nu} +\frac{1}{4}H - \frac{3\Theta_-}{8}\right)\delta \phi \right)
\end{align*}
is invertible.

\textit{Claim 2.} The map
\begin{align*}
\MoveEqLeft \mathcal{L} \colon W^{2,p}(M) \rightarrow L^p (M) \times  W^{1-\frac{1}{p},p}( \partial M), \\
& \delta W \mapsto \left(\Delta_L \delta W,L\delta W(\nu,\cdot) \right)
\end{align*}
is also invertible.

Proof of \textit{Claim 1.} Note that $\mathcal{H}$ is a Fredholm operator of index $0$. It suffices to show that $\mathcal{H}$ is injective. Since the pair $(\bar \phi \equiv 1, \bar W \equiv 0)$ solves system \eqref{MB} for the mean curvature $\bar \tau$, one has
\[ -\frac{1}{8}R +\frac{1}{8}|\bar \sigma|^2-\frac{1}{12}\bar\tau^2=0, \]
and hence
\[ \Delta-\frac{1}{8}R-\frac{7}{8}|\bar\sigma|^2-\frac{5}{12}\bar\tau^2= \Delta -|\bar\sigma|^2-\frac{1}{3}\bar\tau^2 = \Delta - |K|^2. \]
Note that for any $\delta \phi \in W^{2,p}$ we have
\[ \int_M (\delta \phi) \Delta (\delta \phi) dv_g = -\int_M |\nabla (\delta \phi)|^2 dv_g + \int_{\partial M} (\delta \phi) \partial_{\nu} (\delta \phi) d\sigma_g. \]
If $\delta \phi \in \mathrm{ker} \mathcal{H}$, then one obtains
\[  \int_M |K|^2 (\delta \phi)^2 dv_g = - \int_M |\nabla (\delta \phi)|^2 dv_g + \int_{\partial M} \left(\frac{3\Theta_{-}}{8}-\frac{1}{4} H\right) (\delta \phi)^2 d \sigma_g. \]
Since $\frac{1}{4} H = \frac{\Theta_{-}}{8}$, we get
\[ \int_M |K|^2 (\delta \phi)^2 dv_g = - \int_M |\nabla (\delta \phi)|^2 dv_g + \int_{\partial M} \frac{1}{4} \Theta_- (\delta \phi)^2 d \sigma_g \leq 0, \]
where the last inequality follows from the assumption that $\Theta_{-} \leq 0$. The condition $K \neq 0$ implies that $|K|^2 > 0$, and hence $\delta \phi \equiv 0$. Therefore $\mathcal{H}$ has a trivial kernel, and it is invertible.

The proof of \textit{Claim 2} is a consequence of the assumption that $(M,g)$ admits no conformal Killing vector fields. The existence of solutions $(\phi_\tau, W_\tau)$ for $\tau$ in a neighborhood of $\bar \tau$ follows now from the implicit function theorem.
\end{proof}

\begin{thm}\label{3-mu}
Suppose that $(M,g,K)$ satisfy the vacuum Einstein's constraint equations, and $M$ has a boundary $\partial M$ such that $H \equiv 0$ on $\partial M$. Assume that $\tr_g K = 0$ and $K \neq 0$ in some region of $M$. Assume further that $(M,g)$ has no conformal Killing vector fields. Given any data $\tau \in W^{1,p}(M)$, $\Theta_- \in W^{1-\frac{1}{p},p}(\partial M)$, $\Theta_- \le 0$, and $\xi\in W^{1-\frac{1}{p},p}( \partial M)$, there is a positive constant $\eta>0$ such that for any $\mu \in (0, \eta)$, there exists at least one solution $(\phi, W) \in W^{2,p}_+(M)\times W^{2,p}(M)$ of the system \eqref{BMB} for the data $(\hat\sigma = \mu^{12} K, \hat\tau = \mu^{-1} \tau, \Theta_-, \xi)$.
\end{thm}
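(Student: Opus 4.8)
The plan is to follow the scheme of the proof of Theorem~\ref{exi}, now also carrying the two boundary equations. Since $(M,g,K)$ is vacuum with $\tr_g K=0$, the momentum constraint makes $K$ divergence‑free (hence TT) and the Hamiltonian constraint gives $R_g=|K|^2$; moreover, since this section carries the standing assumption $\si(\nu,\cdot)=0$ — here applied to the rescaled tensor $\hat\si=\mu^{12}K$, so that $K(\nu,\cdot)=0$ on $\partial M$ is understood as part of the data — one checks directly that $(\bar\phi\equiv1,\bar W\equiv0)$ solves the full system \eqref{BMB} for the seed data $\bar\si=K$, $\bar\tau=0$, $\bar\Theta_-=0$, $\bar\xi=0$ (recall $H\equiv0$). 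This seed solution is the base point for the implicit function theorem.

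First I would introduce the $\mu$-deformed map
\begin{align*}
\MoveEqLeft \mathcal{G} \colon \mathbb{R} \times W^{2,p}_+(M)\times W^{2,p}(M) \rightarrow L^p(M)\times W^{1-\frac{1}{p},p}(\partial M) \times L^p(M) \times W^{1-\frac{1}{p},p}(\partial M), \\
& \begin{pmatrix} \mu\\ \phi\\ W \end{pmatrix} \mapsto \begin{pmatrix}
\Delta \phi -\frac{1}{8}R \phi + \frac{1}{8}|K|^2\phi^{-7}+\frac{1}{4}\mu^4 \langle K, LW \rangle \phi^{-1}- \left( \frac{\mu^{10}}{12}\tau^2-\frac{\mu^8}{8}| LW|^2 \right) \phi^{5} \\
\partial_{\nu}\phi +\frac{1}{4} H\phi -\frac{\mu^6\Theta_{-}}{8}\phi^3 \\
\nabla_i(LW)^i_j-\frac{2}{3}\mu\nabla_j\tau+6(LW)^i_j\nabla_i\log \phi \\
LW(\nu,.)-\frac{2}{3}\mu\tau\nu^\flat+\frac{\mu^2\Theta_{-}}{2}\nu^\flat-\mu^2\xi
\end{pmatrix},
\end{align*}
the powers of $\mu$ being exactly those produced by substituting $\phi=\mu^{-3}\hat\phi$, $W=\mu^2\hat W$ into system \eqref{BMB} with data $(\hat\si=\mu^{12}K,\hat\tau=\mu^{-1}\tau,\Theta_-,\xi)$ and clearing the common power of $\mu$ from each equation (the given $\tau$, $\Theta_-$, $\xi$ appear with positive powers of $\mu$ precisely so as to vanish at $\mu=0$, matching the seed data). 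As in Theorem~\ref{exi}, $\mathcal{G}$ is a $C^1$-mapping, and the reductions of the previous paragraph give $\mathcal{G}(0,1,0)=(0,0,0,0)$.

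Next I would show that the partial derivative of $\mathcal{G}$ in $(\phi,W)$ at $(0,1,0)$ is an isomorphism. Since every $W$-dependent term in the first two components carries a positive power of $\mu$, and the $\phi$-derivative of $6(LW)^i_j\nabla_i\log\phi$ vanishes at $W\equiv0$, the differential is block diagonal,
\[
D\mathcal{G}|_{(0,1,0)}\begin{pmatrix}\delta\phi\\\delta W\end{pmatrix}
= \begin{pmatrix}
\Delta-\frac{1}{8}R-\frac{7}{8}|K|^2 & , & 0\\
\partial_{\nu}+\frac{1}{4}H & , & 0\\
0 & , & \Delta_L\\
0 & , & L\cdot(\nu,\cdot)
\end{pmatrix}
\begin{pmatrix}\delta\phi\\\delta W\end{pmatrix}.
\]
Using $R=|K|^2$ and $H\equiv0$, the $\delta\phi$-block is $\mathcal{H}\colon\delta\phi\mapsto\bigl((\Delta-|K|^2)\delta\phi,\ \partial_{\nu}\delta\phi\bigr)$; it is Fredholm of index $0$, and it is injective because for $\delta\phi\in\ker\mathcal{H}$ integration by parts gives $\int_M|K|^2(\delta\phi)^2\,dv_g=-\int_M|\nabla\delta\phi|^2\,dv_g\le0$, so $\delta\phi$ is constant and then $\equiv0$ since $|K|^2>0$ on an open set. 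The $\delta W$-block $\mathcal{L}\colon\delta W\mapsto\bigl(\Delta_L\delta W,\ L\delta W(\nu,\cdot)\bigr)$ coincides with the operator $\mathcal{L}$ of the preceding theorem of this section, invertible as a consequence of the hypothesis that $(M,g)$ has no conformal Killing vector fields (cf.\ also \cite{GN14}). Hence $D\mathcal{G}|_{(0,1,0)}$ is an isomorphism, and the implicit function theorem yields $\eta>0$ and, for each $\mu\in(0,\eta)$, a pair $(\phi_\mu,W_\mu)\in W^{2,p}_+(M)\times W^{2,p}(M)$ with $\mathcal{G}(\mu,\phi_\mu,W_\mu)=0$ and $\phi_\mu\to1$, hence $\phi_\mu>0$. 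Setting $\hat\phi_\mu=\mu^3\phi_\mu$ and $\hat W_\mu=\mu^{-2}W_\mu$ and reversing the scaling computation, one obtains that $(\hat\phi_\mu,\hat W_\mu)$ solves \eqref{BMB} for the data $(\hat\si=\mu^{12}K,\hat\tau=\mu^{-1}\tau,\Theta_-,\xi)$, which is the assertion.

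The only genuinely non-routine step is the invertibility of the vector boundary value problem $\mathcal{L}$: one must know that this elliptic system satisfies the complementing (Lopatinski--Shapiro) condition, so that $\mathcal{L}$ is Fredholm, and that its index is $0$, which follows from formal self-adjointness for the pairing whose boundary term is annihilated precisely by the condition $L\delta W(\nu,\cdot)=0$; triviality of the kernel then comes, as above, from $\int_M|L\delta W|^2=0$ forcing $\delta W^\sharp$ to be a conformal Killing field. Since this is already established in the proof of the preceding theorem, here it reduces to a reference. The remaining ingredients — the $C^1$ regularity of $\mathcal{G}$, the identity $\mathcal{G}(0,1,0)=0$, the block-diagonal structure of the differential, and the scaling bookkeeping — are routine, the one point deserving a remark being that the section's standing hypothesis $\si(\nu,\cdot)=0$ requires $K(\nu,\cdot)=0$ on $\partial M$.
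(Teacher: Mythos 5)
Your proposal is correct and follows essentially the same route as the paper: the same $\mu$-deformed map (with identical powers of $\mu$), the same block-triangular/diagonal linearization at $(0,1,0)$, the same use of $R=|K|^2$ and $H\equiv 0$ to reduce the $\delta\phi$-block to the Neumann problem for $\Delta-|K|^2$, and the same rescaling $\hat\phi_\mu=\mu^3\phi_\mu$, $\hat W_\mu=\mu^{-2}W_\mu$ at the end. You merely spell out the invertibility of the two diagonal blocks, which the paper dispatches with a reference to its preceding theorem.
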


\begin{proof}
Let us start with defining a map
\begin{align*}
\MoveEqLeft \mathcal{F} \colon \mathbb{R} \times W^{2,p}_+(M)\times W^{2,p}(M) \rightarrow L^p(M)\times W^{1-\frac{1}{p},p}( \partial M) \times L^p (M) \times W^{1-\frac{1}{p},p}( \partial M),\\
& \begin{pmatrix} \mu\\ \phi\\ W \end{pmatrix} \mapsto \begin{pmatrix}
\Delta \phi -\frac{1}{8}R \phi + \frac{1}{8}|K|^2 \phi^{-7} + \frac{1}{4} \mu^4 \langle K, LW \rangle \phi^{-1} - \left( \frac{\mu^{10}}{12}\tau^2 - \frac{\mu^{8}}{8} |LW|^2 \right) \phi^{5}  \\
\partial_\nu \phi - \frac{\mu^6}{8} \Theta_{-} \phi^3 \\
\nabla_i(LW)^i_j - \frac{2}{3} \mu \nabla_j \tau + 6(LW)^i_j \nabla_i \log \phi \\
LW(\nu,.) - \left( \frac{2}{3} \mu \tau - \frac{\mu^2 \Theta_{-}}{2} \right) \nu^\flat - \mu^2 \xi
\end{pmatrix}.
\end{align*}
It is easy to see that $\mathcal{F}$ is a $C^1$-mapping and $\mathcal{F}(0,1,0) = (0,0,0,0)$. We prove that the partial derivative of $\mathcal{F}$ with respect to the variables $(\phi,W)$ is an isomorphism at $(0,1,0)$. The differential at the point $(0,1,0)$ is given by
\begin{align*}
\MoveEqLeft D\mathcal{F}|_{(0,1,0)} \begin{pmatrix} \delta\phi \\ \delta W \end{pmatrix}\\
&=\begin{pmatrix}
\Delta-\frac{1}{8}R-\frac{7}{8}|K|^2 & , & 0 \\
\partial_{\nu} & , & 0\\
 0 & , & \Delta_L\\
 0 & , &L\cdot(\nu,\cdot)
\end{pmatrix}
\begin{pmatrix}\delta\phi \\ \delta W\end{pmatrix}.
\end{align*}
Note that since $\mathcal{F}(0,1,0) = (0,0,0,0)$, we have $R = |K|^2$, and thus $\Delta-\frac{1}{8}R-\frac{7}{8}|K|^2 = \Delta - |K|^2$. One can easily check that $D\mathcal{F}|_{(0,1,0)}$ is invertible.

The implicit function theorem ensures now that for a sufficiently small parameter $\mu$ there exists a pair $(\phi_{\mu},W_\mu)$ such that $\mathcal{G}(\mu,\phi_\mu,W_\mu)=0$.

Define $\hat \phi_\mu=\mu^3 \phi_\mu$ and $\hat W_\mu=\mu^{-2}W_\mu$ by rescaling. Direct calculations show that $(\hat\phi_\mu, \hat W_\mu)$ solves the system \eqref{BMB} for the data $(\hat\si=\mu^{12} K, \hat\tau = \mu^{-1} \tau,  \Theta_-, \xi)$.
\end{proof}

There exist initial data satisfying the assumptions of Theorem \ref{3-mu}. From the work of Escobar \cite{E}, we know that there exists a conformal factor $\psi>0$ such that $(M,\hat g=\psi^4 g)$ has a constant scalar curvature $R(\hat g)$ and the mean curvature $H_{\hat g}$ of $\partial M$ vanishes. Thus, we can further safely consider a seed manifold $(M,g)$ with $R(g)= \mathrm{const}$ and $H \equiv 0$ on the boundary. Let $\bar \sigma$ be a symmetric TT-tensor of type $(0,2)$. We take particular data $\bar \tau\equiv0$, $\bar\Theta_-\equiv0$, $\bar \xi\equiv0$. It is clear that $W_{\bar \tau}\equiv0$ solves \eqref{Bmomentum} and \eqref{Boundary2}.

In the presence of a boundary, the Yamabe constant is defined as
\[ Y(g,\partial M)= \mathrm{inf}_{u \in C^{\infty}(M), u \not\equiv 0} \frac{\int_M \left( |\nabla u|^2+\frac{1}{8}Ru^2 \right)dv_g + \int_{\partial M} \frac{1}{4} H u^2 d\sigma_g}{\left( \int_M u^6 dv_g \right)^{\frac{1}{3}}}. \]
If $Y(g,\partial M)>0$, $|\bar \sigma|^2>0$, and $(M,g)$ has no conformal Killing vectors, then an argument similar to that of Theorem 4.10 of \cite{Ch04} shows that there is a sub-solution and a super-solution of the equation
\begin{equation}
\Delta \phi -\frac{1}{8}R \phi + \frac{1}{8}|\bar\si|^2\phi^{-7} = 0\end{equation}
with the Neumann boundary condition $\partial_\nu \phi=0$. Therefore there exists a pair $(\phi_{\bar\tau}, W_{\bar \tau}\equiv0)$ solving system \eqref{BMB} with data $\bar \tau\equiv0$, $\bar\Theta_-\equiv0$, $\bar \xi\equiv0$ and the set $(M,\phi_{\bar\tau}^4g, \phi_{\bar\tau}^{-2}\bar\sigma)$ can be regarded as the assumed CMC vacuum initial data in Theorem \ref{3-mu}.

%%%%%%%%%%%%%%%%%%%%%%%%%%%%%%%%%%%%%%%%%%%%%%%%%%%%%%%%%%%%%%%%%%%%%%%%%%%%%%%%
%%%%%%%%%%%%%%%%%%%%%%%%%%%%%%%%%%%%%%%%%%%%%%%%%%%%%%%%%%%%%%%%%%%%%%%%%%%%%%%%

\section{Conformally covariant split system with the cosmological constant}
\label{withlambda}

Many of the results of preceding sections can be easily generalized to the case with a non-zero cosmological constant $\Lambda$. In what follows we discuss these generalizations. In most cases we will omit the proofs, as they are analogous to those given in Sec.\ \ref{closed}. We will, however, focus on the places, where taking into account a non-zero cosmological constant introduces a change with respect to $\Lambda = 0$ cases, and where some additional assumptions are required.

The Einstein vacuum constraint equations with a cosmological constant $\Lambda$ read
\begin{subequations}\label{constraintslambda}
\begin{align}
R_{\widetilde g} - |K|^2_{\widetilde g} + (\tr_{\widetilde g} K)^2 - 2 \Lambda &= 0, \\
\divergence_{\widetilde g} K - d \tr_{\widetilde g} K &= 0.
\end{align}
\end{subequations}
Keeping standard definitions, i.e., $LW$ defined by Eq.\ (\ref{LWdef}) and $K = \frac{\tau}{3}\phi^4 g + \phi^{-2} \sigma + \phi^4 LW$, we get the system
\begin{subequations}\label{MBlambda}
\begin{align}
\Delta \phi -\frac{1}{8}R \phi + \frac{1}{8}|\si|^2\phi^{-7}+\frac{1}{4} \langle \si, LW \rangle \phi^{-1} & \nonumber \\
- \left( \frac{1}{12}\tau^2-\frac{1}{8}| LW|^2 - \frac{1}{4} \Lambda \right)\phi^{5} &= 0,\label{hamiltonian2lambda} \\
\nabla_i(LW)^i_j-\frac{2}{3}\nabla_j\tau+6(LW)^i_j\nabla_i\log \phi &= 0. \label{momentum2lambda}
\end{align}
\end{subequations}

\begin{rem}\label{scalinglambda}
Similarly to the scaling symmetry described in Remark \ref{scaling}, system (\ref{MBlambda}) admits the following scaling. Suppose that system (\ref{MBlambda}) has a solution $(\phi, W)$. Set $\hat \phi = \mu^{-\frac{1}{4}} \phi$, $\hat W = \mu^{\frac{1}{2}} W$ for some positive number $\mu \in \mathbb R^+$. Then $(\hat \phi, \hat W)$ satisfy system (\ref{MBlambda}) with the data $\hat \sigma$, $\hat \tau$, and the cosmological constant $\hat \Lambda$ given by $\hat \sigma = \mu^{-1} \sigma$, $\hat \tau = \mu^{\frac{1}{2}} \tau$, $\hat \Lambda = \mu \Lambda$.
\end{rem}

As before, we assume that $(M,g,K)$ already satisfies Eqs.\ (\ref{constraintslambda}). Denote $\bar \tau = \mathrm{tr}_g K$, and $\bar \sigma_{ij} = K_{ij} - \frac{1}{3} \bar \tau g_{ij}$. The following result is a straightforward generalization of Theorem \ref{3}.

\begin{thm}\label{lambda1}
Suppose that we already have vacuum initial data $(M,g,K)$ satisfying Eqs.\ (\ref{constraintslambda}) with $\bar \tau = \mathrm{tr}_g K = \mathrm{const}$. Assume that $(M,g)$ has no conformal Killing vector fields. Assume further that $-|K|^2 + \Lambda \le 0$ on $M$, and $-|K|^2 + \Lambda < 0$ in some region of $M$. There is a small neighborhood of $\bar \tau$ in $W^{1,p}$ such that for any $\tau$ in this neighborhood there exists $(\phi_\tau,W_\tau) \in W^{2,p}_+ \times W^{2,p}$ solving system \eqref{MBlambda} for the data $\bar\si_{ij}=K_{ij}-\frac{\bar \tau}{3}g_{ij}$ and $\tau$.
\end{thm}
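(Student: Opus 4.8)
The plan is to mimic the proof of Theorem \ref{3} essentially verbatim, the only new feature being the bookkeeping of the cosmological constant in the zeroth-order coefficient of the linearized Lichnerowicz operator. First I would define the operator
\[
\mathcal{F} \colon W^{1,p} \times W^{2,p}_+ \times W^{2,p} \to L^p \times L^p, \qquad
\begin{pmatrix} \tau \\ \phi \\ W \end{pmatrix} \mapsto
\begin{pmatrix}
\Delta \phi - \tfrac{1}{8} R \phi + \tfrac{1}{8} |\bar\si|^2 \phi^{-7} + \tfrac{1}{4} \langle \bar\si, LW \rangle \phi^{-1} - \left( \tfrac{1}{12}\tau^2 - \tfrac{1}{8}|LW|^2 - \tfrac{1}{4}\Lambda \right) \phi^{5} \\
\nabla_i(LW)^i_j - \tfrac{2}{3}\nabla_j \tau + 6(LW)^i_j \nabla_i \log\phi
\end{pmatrix},
\]
observe that it is $C^1$ (this is the same Sobolev multiplication / composition bookkeeping as in Theorem \ref{3}, using $p>3$), and check that $\mathcal{F}(\bar\tau, 1, 0) = (0,0)$: the first component vanishes precisely because $(M,g,K)$ solves Eqs.\ \eqref{constraintslambda} with constant $\bar\tau$, which after the conformal reduction gives $-\tfrac{1}{8}R + \tfrac{1}{8}|\bar\si|^2 - \tfrac{1}{12}\bar\tau^2 + \tfrac{1}{4}\Lambda = 0$; the second component vanishes because $\bar W \equiv 0$.

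Next I would compute the partial differential $D\mathcal{F}|_{(\bar\tau,1,0)}$ with respect to $(\phi, W)$. As in Theorem \ref{3} this is block triangular,
\[
D\mathcal{F}|_{(\bar\tau,1,0)} \begin{pmatrix} \delta\phi \\ \delta W \end{pmatrix} =
\begin{pmatrix}
\Delta - \tfrac{1}{8}R - \tfrac{7}{8}|\bar\si|^2 - \tfrac{5}{12}\bar\tau^2 + \tfrac{5}{4}\Lambda & , & \tfrac{1}{4}\langle \bar\si, L(\cdot) \rangle \\
0 & , & \Delta_L
\end{pmatrix}
\begin{pmatrix} \delta\phi \\ \delta W \end{pmatrix},
\]
so invertibility reduces to invertibility of the two diagonal operators. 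Claim 2, the bijectivity of $\Delta_L \colon W^{2,p} \to L^p$, is unchanged: it is exactly the statement used in Theorem \ref{3}, valid because $(M,g)$ is closed with no conformal Killing vector fields. For Claim 1, the operator $\mathcal{H} = \Delta - \tfrac{1}{8}R - \tfrac{7}{8}|\bar\si|^2 - \tfrac{5}{12}\bar\tau^2 + \tfrac{5}{4}\Lambda$ is Fredholm of index $0$, so it suffices to show it is injective; substituting the identity $-\tfrac{1}{8}R + \tfrac{1}{8}|\bar\si|^2 - \tfrac{1}{12}\bar\tau^2 + \tfrac{1}{4}\Lambda = 0$ into the zeroth-order term gives
\[
\Delta - \tfrac{1}{8}R - \tfrac{7}{8}|\bar\si|^2 - \tfrac{5}{12}\bar\tau^2 + \tfrac{5}{4}\Lambda = \Delta - |\bar\si|^2 - \tfrac{1}{3}\bar\tau^2 + \Lambda = \Delta - |K|^2 + \Lambda.
\]
This is exactly where the new hypothesis enters: the assumption $-|K|^2 + \Lambda \le 0$ on $M$ with strict inequality somewhere makes $\Delta - |K|^2 + \Lambda$ a non-positive operator with trivial kernel (pairing with $\delta\phi$ and integrating by parts on the closed manifold $M$ kills the gradient term and forces $\delta\phi \equiv 0$ on the region where $-|K|^2+\Lambda<0$, hence everywhere by unique continuation / the maximum principle, or simply because a nontrivial kernel element would have to be a nonzero constant-sign eigenfunction, contradicting strict negativity on an open set). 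With both diagonal blocks invertible, $D\mathcal{F}|_{(\bar\tau,1,0)}$ is an isomorphism, and the implicit function theorem yields a $W^{1,p}$-neighborhood of $\bar\tau$ over which solutions $(\phi_\tau, W_\tau) \in W^{2,p}_+ \times W^{2,p}$ exist.

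I expect no serious obstacle: the cosmological constant contributes only a harmless shift in one scalar coefficient, and the sign condition $-|K|^2 + \Lambda \le 0$ is precisely tailored to reproduce the negativity of the linearized Lichnerowicz operator that drove Claim 1 in the $\Lambda = 0$ case. The one point deserving a little care is the argument that $\Delta - |K|^2 + \Lambda$ has trivial kernel when the potential is only nonpositive (rather than strictly negative everywhere): here one uses that any kernel element, being an eigenfunction for eigenvalue $0$ of a Schrödinger operator with nonpositive and not identically zero potential, cannot exist — equivalently, the associated quadratic form $\int_M (|\nabla\delta\phi|^2 + (|K|^2 - \Lambda)(\delta\phi)^2)\,dv_g$ is nonnegative and vanishes only at $\delta\phi \equiv 0$ since its vanishing forces both $\nabla\delta\phi \equiv 0$ and $\delta\phi \equiv 0$ on the region where $|K|^2 - \Lambda > 0$. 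The remaining verifications ($C^1$ regularity of $\mathcal{F}$, Fredholmness of $\mathcal{H}$, the identity at the base point) are identical to Theorem \ref{3} and can be cited rather than repeated.
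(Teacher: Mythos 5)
Your proposal is correct and follows essentially the same route as the paper: the paper's proof of Theorem \ref{lambda1} likewise reduces everything to Theorem \ref{3}, redefines $\mathcal{H}$ with the extra $\tfrac{5}{4}\Lambda$ term, uses the base-point identity to rewrite it as $\Delta - |K|^2 + \Lambda$, and invokes the sign hypothesis to get injectivity. Your extra care about the kernel when the potential is only nonpositive (the quadratic-form argument forcing $\nabla\delta\phi\equiv 0$ and then $\delta\phi\equiv 0$ via the region of strict negativity) is a welcome elaboration of what the paper asserts more tersely.
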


Note that $-|K|^2 + \Lambda = - |\bar \sigma|^2 - \frac{1}{3} \bar \tau^2 + \Lambda$. In order to satisfy the condition $-|K|^2 + \Lambda < 0$ it is enough to require that $- \frac{1}{3} \bar \tau^2 + \Lambda < 0$.

\begin{proof}
The only difference with respect to the proof of Theorem \ref{3} is that now the function $\mathcal H \colon W^{2,p} \to L^p$ is defined by
\[ \delta \phi \mapsto \left[ \Delta - \frac{1}{8} R - \frac{7}{8} |\bar \sigma|^2 - \frac{5}{4} \left( \frac{\bar \tau^2}{3} - \Lambda \right)  \right] \delta \phi. \]
For $\phi \equiv 1$, $W \equiv 0$, we have from Eq.\ (\ref{hamiltonian2lambda})
\[ -\frac{1}{8}R + \frac{1}{8} |\bar \sigma|^2 - \frac{1}{4}\left( \frac{\bar \tau^2}{3} - \Lambda \right) = 0.  \]
Thus
\[ \Delta - \frac{1}{8} R - \frac{7}{8} |\bar \sigma|^2 - \frac{5}{4} \left( \frac{\bar \tau^2}{3} - \Lambda \right) = \Delta - |\bar \sigma|^2 - \frac{1}{3} \bar \tau^2 + \Lambda, \]
which is a negatively definite map, provided that
\[ - |\bar \sigma|^2 - \frac{1}{3} \bar \tau^2 + \Lambda \le 0 \]
on $M$, and
\[ - |\bar \sigma|^2 - \frac{1}{3} \bar \tau^2 + \Lambda < 0 \]
in some region of $M$. Noting that $- |\bar \sigma|^2 - \frac{1}{3} \bar \tau^2 = - |K|^2$, we conclude that $\mathcal H$ has a trivial kernel.
\end{proof}

Theorem \ref{exi} can be generalized in two ways. One can generate data corresponding to $\Lambda \neq 0$ from a seed-initial data with $\mathrm{tr}_g K = 0$ and $\Lambda = 0$, i.e., initial data satisfying Eqs.\ (\ref{originalconstraints}). The other possibility is to start with seed-initial data that already satisfy constraints (\ref{constraintslambda}) with $\mathrm{tr}_g K = 0$ and some nonzero value of $\Lambda$. These data can be then used to generate another set of initial data corresponding to some mean curvature $\hat \tau \neq 0$ and a different value of the cosmological constant $\hat \Lambda$. We will discuss these two cases in what follows.

\begin{thm}\label{exi_lambda_a}
Suppose that $(M, g, K)$ satisfy the constraint equations (\ref{originalconstraints}), and $\mathrm{tr}_g K = 0$. Let $K \neq 0$ in some region, and let $(M,g)$ admit no conformal Killing vector fields. Given any $\tau \in W^{1,p}$ and $\Lambda$, there is a positive constant $\eta > 0$ such that for any $\mu \in (0,\eta)$ there exists a solution $(\phi, W) \in W^{2,p}_+ \times W^{2,p}$ of system (\ref{MBlambda}) for the data $\hat \sigma = \mu^{12} K$, $\hat \tau = \mu^{-1} \tau$, and the cosmological constant $\hat \Lambda = \mu^{-2} \Lambda$.
\end{thm}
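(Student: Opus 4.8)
The plan is to mimic the deformation argument used for Theorem~\ref{exi}, carrying the cosmological-constant term through the $\mu$-family and through the final rescaling. Since $(M,g,K)$ satisfies the constraint equations \eqref{originalconstraints} with $\tr_g K = 0$, the pair $(\bar\phi\equiv 1,\bar W\equiv 0)$ solves \eqref{MBlambda} for $\Lambda = 0$, $\bar\tau = 0$, $\bar\sigma = K$, and the Hamiltonian constraint gives $R = |K|^2$. I would introduce the map $\mathcal{G}\colon \mathbb{R}\times W^{2,p}_+\times W^{2,p}\to L^p\times L^p$,
\[
\begin{pmatrix}\mu\\ \phi\\ W\end{pmatrix}\longmapsto
\begin{pmatrix}
\Delta\phi-\frac18 R\phi+\frac18|K|^2\phi^{-7}+\frac14\mu^4\langle K,LW\rangle\phi^{-1}-\left(\frac{\mu^{10}}{12}\tau^2-\frac{\mu^{8}}{8}|LW|^2-\frac{\mu^{10}}{4}\Lambda\right)\phi^{5}\\
\nabla_i(LW)^i_j-\frac23\mu\nabla_j\tau+6(LW)^i_j\nabla_i\log\phi
\end{pmatrix}.
\]
The exponent $10$ attached to $\Lambda$ is the one that, after the rescaling below, reproduces $\hat\Lambda = \mu^{-2}\Lambda$; it necessarily matches the exponent on $\tau^2$, consistently with the fact that $\tau^2$ and $\Lambda$ occur in \eqref{MBlambda} only through the combination $\frac1{12}\tau^2-\frac14\Lambda$. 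One checks that $\mathcal{G}$ is $C^1$ and, using $R=|K|^2$, that $\mathcal{G}(0,1,0)=(0,0)$.

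Next I would compute $D\mathcal{G}|_{(0,1,0)}$ with respect to $(\phi,W)$. At $\mu=0$ every $\mu$-dependent term vanishes identically, and the term $6(LW)^i_j\nabla_i\log\phi$ has vanishing differential at $(\phi,W)=(1,0)$, so the differential is block-diagonal:
\[
D\mathcal{G}|_{(0,1,0)}\begin{pmatrix}\delta\phi\\ \delta W\end{pmatrix}=\begin{pmatrix}\Delta-\frac18 R-\frac78|K|^2 & 0\\ 0 & \Delta_L\end{pmatrix}\begin{pmatrix}\delta\phi\\ \delta W\end{pmatrix}=\begin{pmatrix}\Delta-|K|^2 & 0\\ 0 & \Delta_L\end{pmatrix}\begin{pmatrix}\delta\phi\\ \delta W\end{pmatrix},
\]
again by $R=|K|^2$. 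In particular the $\Lambda$-term, being $O(\mu^{10})$, drops out of the linearization, which is why no sign assumption on $\Lambda$ (in contrast with Theorem~\ref{lambda1}) is needed here. The operator $\Delta-|K|^2\colon W^{2,p}\to L^p$ is Fredholm of index $0$, and an integration by parts together with the hypothesis $K\neq 0$ on some region shows it has trivial kernel, exactly as in the proof of Theorem~\ref{3}; invertibility of $\Delta_L$ is the no-conformal-Killing-field assumption. Hence $D\mathcal{G}|_{(0,1,0)}$ is an isomorphism, and the implicit function theorem furnishes $\eta>0$ and $C^1$ solutions $(\phi_\mu,W_\mu)\in W^{2,p}_+\times W^{2,p}$ of $\mathcal{G}(\mu,\phi_\mu,W_\mu)=0$ for $\mu\in(0,\eta)$.

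Finally I would set $\hat\phi_\mu=\mu^3\phi_\mu$, $\hat W_\mu=\mu^{-2}W_\mu$, substitute $\phi_\mu=\mu^{-3}\hat\phi_\mu$, $W_\mu=\mu^2\hat W_\mu$ into $\mathcal{G}(\mu,\phi_\mu,W_\mu)=0$, and multiply the first equation by $\mu^3$ and the second by $\mu^{-2}$; the same bookkeeping of powers of $\mu$ as at the end of the proof of Theorem~\ref{exi} then shows that $(\hat\phi_\mu,\hat W_\mu)$ solves \eqref{MBlambda} for the data $\hat\sigma=\mu^{12}K$, $\hat\tau=\mu^{-1}\tau$, $\hat\Lambda=\mu^{-2}\Lambda$. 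The main thing to get right --- and what I would verify first --- is the exponent on $\Lambda$ inside $\mathcal{G}$; everything else is a routine transcription of the $\Lambda=0$ argument. Note that the scaling of Remark~\ref{scalinglambda} by itself cannot replace this step, since it scales $\Lambda$ multiplicatively and thus cannot generate a nonzero cosmological constant from a seed with $\Lambda=0$, so the deformation is genuinely needed.
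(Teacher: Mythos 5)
Your proposal is correct and coincides with the paper's own proof: the same $\mu$-deformed map $\mathcal{G}$ (including the $\mu^{10}$ coefficient on $\Lambda$), the same block-diagonal linearization $\mathrm{diag}(\Delta-|K|^2,\Delta_L)$ at $(0,1,0)$, and the same rescaling $\hat\phi_\mu=\mu^3\phi_\mu$, $\hat W_\mu=\mu^{-2}W_\mu$ yielding $\hat\Lambda=\mu^{-2}\Lambda$. The bookkeeping of the $\Lambda$ exponent checks out, so nothing further is needed.
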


\begin{proof}
The function $\mathcal{G}$ appearing in the proof of Theorem \ref{exi} is now defined as
\begin{align*}
\MoveEqLeft \mathcal{G} \colon \mathbb{R} \times W^{2,p}_+ \times W^{2,p} \rightarrow L^p\times L^p, \\
& \begin{pmatrix} \mu \\ \phi\\ W \end{pmatrix} \mapsto \begin{pmatrix}
\Delta \phi -\frac{1}{8}R \phi + \frac{1}{8}|K|^2\phi^{-7}+\frac{1}{4}\mu^4 \langle K, LW \rangle \phi^{-1} - \left( \frac{\mu^{10} \tau^2}{12} - \frac{\mu^8}{8}| LW|^2 - \frac{\mu^{10}}{4} \Lambda \right) \phi^{5} \\
\nabla_i(LW)^i_j-\frac{2}{3}
\mu\nabla_j\tau+6(LW)^i_j\nabla_i\log \phi
\end{pmatrix}.
\end{align*}
The assumptions of the theorem imply that $\mathcal G(0,1,0) = (0,0)$. The differential of $\mathcal G$ with respect to $(\phi, W)$ at the point $(0, \bar \phi \equiv 1, \bar W \equiv 0)$ reads
\[ D\mathcal{G}|_{(0, 1, 0)} \begin{pmatrix} \delta\phi \\ \delta W \end{pmatrix}
 = \begin{pmatrix}
\Delta-\frac{1}{8}R-\frac{7}{8}|K|^2 & , & 0\\
 0 & , & \Delta_L
\end{pmatrix}
\begin{pmatrix}\delta\phi \\ \delta W\end{pmatrix}, \]
where $\Delta_L W=\mbox{div}_g(LW)$. There is no change with respect to the proof of Theorem \ref{exi} at this point. We conclude that for sufficiently small $\mu$, there exist $\phi_\mu$ and $W_\mu$ solving $\mathcal G(\mu, \phi_\mu, W_\mu) = 0$.

The scaling argument works analogously to the one used in Theorem \ref{exi}. We define $\hat \phi_\mu = \mu^3 \phi_\mu$, $\hat W_\mu = \mu^{-2} W_\mu$. Then the pair $(\hat \phi_\mu, \hat W_\mu)$ solves system (\ref{MBlambda}) for the data $(\hat \sigma = \mu^{12} K, \hat \tau = \mu^{-1} \tau)$, and the cosmological constant $\hat \Lambda = \mu^{-2} \Lambda$.
\end{proof}

\begin{thm}\label{exi_lambda_b}
Suppose that $(M, g, K)$ satisfy the constraint equations (\ref{constraintslambda}) with a non-zero cosmological constant $\Lambda$, and $\mathrm{tr}_g K = 0$. Assume that $(M,g)$ admit no conformal Killing vector fields. Assume further that $- |K|^2 + \Lambda \le 0$ on $M$ and $-|K|^2 + \Lambda < 0$ in some region of $M$. Given any $\tau \in W^{1,p}$, there is a positive constant $\eta > 0$ such that for any $\mu \in (0,\eta)$ there exists a solution $(\phi, W) \in W^{2,p}_+ \times W^{2,p}$ of system (\ref{MBlambda}) for the data $\hat \sigma = \mu^{12} K$, $\hat \tau = \mu^{-1} \tau$, and the cosmological constant $\hat \Lambda = \mu^{-12} \Lambda$.
\end{thm}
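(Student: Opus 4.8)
The plan is to follow the strategy of Theorems \ref{exi} and \ref{exi_lambda_a}: build a $\mu$-deformed version of system (\ref{MBlambda}), solve it near $\mu=0$ by the implicit function theorem, and recover a genuine solution of (\ref{MBlambda}) by rescaling. Since $(M,g,K)$ satisfies (\ref{constraintslambda}) with $\mathrm{tr}_g K=0$, the second constraint gives $\mathrm{div}_g K=0$, so $K$ is TT, and the first constraint gives $R_g=|K|^2+2\Lambda$. Hence $(\bar\phi\equiv1,\bar W\equiv0)$ solves (\ref{MBlambda}) for the data $\bar\sigma=K$, $\bar\tau\equiv0$ and the given $\Lambda$; this is the base point of the deformation.

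First I would introduce
\begin{align*}
\MoveEqLeft \mathcal{G} \colon \mathbb{R} \times W^{2,p}_+\times W^{2,p} \rightarrow L^p\times L^p, \\
& \begin{pmatrix} \mu \\ \phi\\ W \end{pmatrix} \mapsto \begin{pmatrix}
\Delta \phi -\frac{1}{8}R \phi + \frac{1}{8}|K|^2\phi^{-7}+\frac{1}{4}\mu^4 \langle K, LW \rangle \phi^{-1} - \left( \frac{\mu^{10}}{12}\tau^2-\frac{\mu^{8}}{8}|LW|^2 - \frac{1}{4}\Lambda \right) \phi^{5} \\
\nabla_i(LW)^i_j-\frac{2}{3}\mu\nabla_j\tau+6(LW)^i_j\nabla_i\log \phi
\end{pmatrix},
\end{align*}
which is precisely what one obtains by substituting $\phi=\mu^3\phi_\mu$, $W=\mu^{-2}W_\mu$, $\hat\sigma=\mu^{12}K$, $\hat\tau=\mu^{-1}\tau$, $\hat\Lambda=\mu^{-12}\Lambda$ into (\ref{MBlambda}) and clearing a common factor $\mu^3$. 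The key feature, in contrast with Theorem \ref{exi_lambda_a} where $\hat\Lambda=\mu^{-2}\Lambda$ forces a $\mu^{10}$ in front of $\Lambda$, is that the exponent $-12$ is exactly the one that makes the cosmological term appear in $\mathcal{G}$ with no power of $\mu$. One checks that $\mathcal{G}$ is $C^1$ and, using $R_g=|K|^2+2\Lambda$, that $\mathcal{G}(0,1,0)=(0,0)$.

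Next I would differentiate $\mathcal{G}$ in $(\phi,W)$ at $(0,1,0)$. Every $\mu$-weighted term (the cross term, the $\tau^2$ and $|LW|^2$ terms, the $\nabla_j\tau$ term, and the term $6(LW)\nabla\log\phi$, which vanishes to first order at $W=0$) contributes nothing, while linearising $\tfrac14\Lambda\phi^5$ gives $\tfrac54\Lambda\,\delta\phi$, so that
\[ D\mathcal{G}|_{(0,1,0)} \begin{pmatrix}\delta\phi \\ \delta W\end{pmatrix} = \begin{pmatrix} \Delta-\frac{1}{8}R-\frac{7}{8}|K|^2+\frac{5}{4}\Lambda & , & 0 \\ 0 & , & \Delta_L \end{pmatrix}\begin{pmatrix}\delta\phi \\ \delta W\end{pmatrix}. \]
Using $R_g=|K|^2+2\Lambda$ the top-left block becomes $\mathcal{H}=\Delta-|K|^2+\Lambda$, which is exactly the operator treated in the proof of Theorem \ref{lambda1}: it is self-adjoint Fredholm of index $0$, and the hypotheses $-|K|^2+\Lambda\le0$ on $M$ with strict inequality somewhere make it negative definite, hence invertible. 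The block $\Delta_L$ is invertible since $(M,g)$ is closed with no conformal Killing vector fields (Claim 2 in the proof of Theorem \ref{3}). As the $2\times2$ block matrix is diagonal, $D\mathcal{G}|_{(0,1,0)}$ is an isomorphism, and the implicit function theorem supplies $\eta>0$ and, for each $\mu\in(0,\eta)$, a pair $(\phi_\mu,W_\mu)\in W^{2,p}_+\times W^{2,p}$ with $\mathcal{G}(\mu,\phi_\mu,W_\mu)=(0,0)$ and $\phi_\mu>0$.

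Finally I would set $\hat\phi_\mu=\mu^3\phi_\mu$, $\hat W_\mu=\mu^{-2}W_\mu$ and verify by direct substitution, keeping track of the powers of $\mu$ as above, that $(\hat\phi_\mu,\hat W_\mu)$ solves (\ref{MBlambda}) for $\hat\sigma=\mu^{12}K$, $\hat\tau=\mu^{-1}\tau$ and $\hat\Lambda=\mu^{-12}\Lambda$. I do not expect a genuine obstacle here; the only point requiring care is the bookkeeping of the $\mu$-exponents, so that the cosmological term survives the $\mu\to0$ limit as $+\Lambda$ — this is what reduces the top-left block of the differential to $\Delta-|K|^2+\Lambda$ and makes the sign hypothesis on $-|K|^2+\Lambda$ precisely the condition needed for invertibility.
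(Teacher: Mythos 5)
Your proposal is correct and follows essentially the same route as the paper: the same $\mu$-deformed map $\mathcal G$ with the $\Lambda$-term left $\mu$-independent, invertibility of the linearization at $(0,1,0)$ via $\Delta-|K|^2+\Lambda$ (using the sign hypothesis) and $\Delta_L$, the implicit function theorem, and the rescaling $\hat\phi_\mu=\mu^3\phi_\mu$, $\hat W_\mu=\mu^{-2}W_\mu$. The explicit $\mu$-exponent bookkeeping you supply is exactly what the paper leaves implicit.
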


\begin{proof}
The proof is essentially a variant of the proof of Theorem \ref{exi_lambda_a}. The map $\mathcal G$ is defined as
\begin{align*}
\MoveEqLeft \mathcal{G} \colon \mathbb{R} \times W^{2,p}_+ \times W^{2,p} \rightarrow L^p\times L^p, \\
& \begin{pmatrix} \mu \\ \phi\\ W \end{pmatrix} \mapsto \begin{pmatrix}
\Delta \phi -\frac{1}{8}R \phi + \frac{1}{8}|K|^2\phi^{-7}+\frac{1}{4}\mu^4 \langle K, LW \rangle \phi^{-1} - \left( \frac{\mu^{10} \tau^2}{12} - \frac{\mu^8}{8}| LW|^2 - \frac{1}{4} \Lambda \right) \phi^{5} \\
\nabla_i(LW)^i_j-\frac{2}{3}
\mu\nabla_j \tau+6(LW)^i_j\nabla_i\log \phi
\end{pmatrix},
\end{align*}
i.e., with no $\mu$-dependent coefficient in front of the $\Lambda$ term. Injectivity of $D\mathcal G$ (with respect to $(\phi, W)$) at the point $(\mu = 0, \bar \phi \equiv 1, \bar W \equiv 0)$ is ensured by requiring that $-|K|^2 + \Lambda \le 0$ on $M$, and $-|K|^2 + \Lambda < 0$ in some region of $M$, together with the condition that $(M,g)$ admits no conformal Killing vector fields (cf. the proof of Theorem \ref{lambda1}). The implicit function theorem then yields that for a sufficiently small $\mu$, there exist $\phi_\mu$ and $W_\mu$ solving $\mathcal G(\mu, \phi_\mu, W_\mu) = 0$.

The scaling argument can be now used to infer the existence of solutions of system (\ref{MBlambda}), but at a cost of rescaling both $K$ and $\Lambda$. We define $\hat \phi_\mu = \mu^3 \phi_\mu$, $\hat W_\mu = \mu^{-2} W_\mu$. Then $\hat \phi_\mu$ and $\hat W_\mu$ solve system (\ref{MBlambda}) for the data $\hat \sigma = \mu^{12} K$, $\hat \tau = \mu^{-1} \tau$ and the cosmological constant $\hat \Lambda = \mu^{-12} \Lambda$.
\end{proof}

In the same spirit it is possible to generalize the results of Sec.\ \ref{boundarycase} referring to manifolds with black hole boundaries. We omit this discussion here; the corresponding theorems can be easily formulated and proved by combining the techniques used in this section and in Sec.\ \ref{closed}.

%%%%%%%%%%%%%%%%%%%%%%%%%%%%%%%%%%%%%%%%%%%%%%%%%%%%%%%%%%%%%%%%%%%%%%%%%%%%%%%%
%%%%%%%%%%%%%%%%%%%%%%%%%%%%%%%%%%%%%%%%%%%%%%%%%%%%%%%%%%%%%%%%%%%%%%%%%%%%%%%%

\section{Numerical examples}
\label{numerics}

In this section we give simple numerical examples of solutions of system (\ref{MB}) with non-constant $\tau$. Our examples are inspired by a recent numerical study of non-CMC solutions obtained in the framework of the standard conformal approach \cite{Dilts_numerical}. Numerical techniques used in this paper are essentially adapted from \cite{Mach_Knopik}. We should emphasize that the spectrum of solutions of systems introduced in this section could be complex and should probably be a subject of a separate investigation, similar to \cite{Dilts_numerical} or \cite{Mach_Knopik}. The main aim of providing our numerical examples in this work is to demonstrate the applicability of the conformally covariant split method in numerical studies.

We start by considering the manifold $M = \mathbb S^1 \times \mathbb S^2$ endowed with the metric
\[ g = d \alpha^2 + d\theta^2 + \sin^2 \theta d \varphi^2. \]
Here $\alpha$ is the coordinate on $\mathbb S^1$, and $(\theta,\varphi)$ are coordinates on $\mathbb S^2$. The scalar curvature associated with metric $g$ reads $R = 2$. We assume the tensor $\sigma$ in the form
\[ \sigma_{ij} = b \left( \begin{array}{ccc}
-2 & 0 & 0 \\
0 & 1 & 0 \\
0 & 0 & \sin^2 \theta
\end{array} \right) \]
[in coordinates $(\alpha, \theta, \varphi)$], where $b$ is a constant. An elementary calculation shows that $\sigma$ is both trace-less and divergence-free. We have $|\sigma|^2 = 6b^2$.

For the vector field $W^i$ we assume an ansatz $W^i = (f(\alpha),0,0)$. It follows that
\[ (LW)_{ij} = \frac{2}{3} f^\prime \left( \begin{array}{ccc}
2 & 0 & 0 \\
0 & -1 & 0 \\
0 & 0 & - \sin^2 \theta
\end{array} \right), \]
where the prime denotes the derivative with respect to $\alpha$. It is easy to show that $|LW|^2 = \frac{8}{3} (f^\prime)^2$ and $\langle \sigma, LW \rangle  = -4 b f^\prime$. We also have $\nabla_i (LW)^i_j = \left( \frac{4}{3} f^{\prime \prime}, 0, 0 \right)$.

We will assume further that the conformal factor $\phi$ depends only on $\alpha$. This gives
\[ (LW)^i_j \nabla_i \log \phi = \left( \frac{4}{3} f^\prime (\log \phi)^\prime, 0, 0 \right). \]
It follows from Eq.\ (\ref{momentum2}) that the above choices are only consistent with $\partial_\theta \tau = \partial_\varphi \tau = 0$. From now on we assume that the mean curvature $\tau$ depends only on $\alpha$. Equation (\ref{momentum2}) then yields
\begin{equation}
\label{momentumf}
f^{\prime \prime} - \frac{1}{2} \tau^\prime + 6 \phi^{-1} f^\prime \phi^\prime = 0.
\end{equation}

Equation (\ref{hamiltonian2}) can be written as
\begin{equation}
\label{hamiltonianf}
\phi^{\prime \prime} - \frac{1}{4} \phi + \frac{3}{4} b^2 \phi^{-7} - b f^\prime \phi^{-1} -  \frac{1}{3}\left( \frac{\tau^2}{4} - (f^\prime)^2 \right) \phi^5 = 0.
\end{equation}

Note that the function $f$ appears in Eqs.\ (\ref{momentumf}) and (\ref{hamiltonianf}) only through its derivatives. Consequently, we will now define $w = f^\prime$, and write Eqs.\ (\ref{momentumf}) and (\ref{hamiltonianf}) as
\begin{subequations}\label{systemS1S2}
\begin{align}
\phi^{\prime \prime} - \frac{1}{4} \phi + \frac{3}{4} b^2 \phi^{-7} - b w \phi^{-1} -  \frac{1}{3}\left( \frac{\tau^2}{4} - w^2 \right) \phi^5 &= 0, \label{systemS1S2hamiltonian} \\
w^\prime - \frac{1}{2} \tau^\prime + 6 \phi^{-1} w \phi^\prime &= 0. \label{systemS1S2momentum}
\end{align}
\end{subequations}
The above set of equations has to be solved assuming that $\phi$ and $w$ are both functions on $\mathbb S^1$. The condition that $w = f^\prime$, where $f$ is also a regular function on $\mathbb S^1$ yields, in particular, that
\begin{equation}
\label{whasnoconstantterm}
\int_0^{2\pi} w d\alpha = 0.
\end{equation}

System (\ref{systemS1S2}) has an elementary solution for $\tau = \mathrm{const}$. The condition $\tau^\prime = 0$ and Eq.\ (\ref{systemS1S2momentum}) give $w = C \phi^{-6}$, where $C$ is an integration constant. Since $\phi^{-6}$ is non-negative, it follows from Eq.\ (\ref{whasnoconstantterm}) that in fact $C = 0$ and $w \equiv 0$. Equation (\ref{systemS1S2hamiltonian}) is now reduced to
\begin{equation}
\label{systemS1S2zerow}
\phi^{\prime \prime} - \frac{1}{4} \phi + \frac{3}{4} b^2 \phi^{-7} -  \frac{1}{12} \tau^2 \phi^5 = 0.
\end{equation}
Let $b \neq 0$. It is easy to see that the above equation admits a positive solution $\phi = \phi_0 = \mathrm{const}$. Assuming $\phi = \phi_0$, we get from Eq.\ (\ref{systemS1S2zerow})
\[ \mathrm{LHS} \equiv -x^2 + 3 b^2 - \frac{1}{3} \tau^2 x^3 = 0, \]
where $x = \phi_0^4$. Clearly, for $x = 0$ we have $\mathrm{LHS} = 3b^2 \ge 0$. For $x \to \infty$ one gets $\mathrm{LHS} \to - \infty$. Thus there exist a positive solution $\phi = \phi_0$ for all constants $b \neq 0$ and $\tau$. It can be easily expressed by a Cardano formula. It can be also shown that a positive solution to Eq.\ (\ref{systemS1S2zerow}) (not necessarily constant) is unique \cite{Isenberg}. Thus the solution $\phi = \phi_0$ is the only positive one.

We now turn to the non-CMC case, and construct simple numerical solutions of system (\ref{systemS1S2}). We employ the following spectral numerical scheme. We express both functions $\phi$ and $w$ truncating a Fourier series:
\[ \phi = \frac{a_0}{2} + \sum_{k = 1}^N a_k \cos (k \alpha) + \sum_{k=1}^N b_k \sin(k \alpha), \]
\[ w = \sum_{k = 1}^N c_k \cos (k \alpha) + \sum_{k=1}^N d_k \sin(k \alpha). \]
Note that there is no constant term in the expression for $w$, due to condition (\ref{whasnoconstantterm}). These expansions are substituted into Eqs.\ (\ref{systemS1S2}). Equation (\ref{systemS1S2hamiltonian}) is then projected on functions $\cos(k \alpha)$, $k = 0, \dots, N$ and $\sin(k \alpha)$, $k = 1, \dots, N$. Equation (\ref{systemS1S2momentum}) is projected on  $\cos(k \alpha)$ and $\sin(k \alpha)$, $k = 1, \dots, N$. This yields $4N + 1$ nonlinear conditions on the coefficients $a_k$, $k = 0, \dots, N$ and $b_k$, $c_k$, $d_k$ with $k = 1, \dots N$. These conditions are then solved using a standard Newton-Raphson procedure.

The integrals
\[ \frac{1}{\pi} \int_0^{2 \pi} d \alpha f(\alpha) \cos (k \alpha), \quad \frac{1}{\pi} \int_0^{2 \pi} d \alpha f(\alpha) \sin (k \alpha) \]
are performed using standard formulas
\[ \frac{1}{\pi} \int_0^{2 \pi} d \alpha f(\alpha) \cos (k \alpha) = \frac{2}{\tilde N} \sum_{j = 1}^{\tilde N} f(\alpha_j) \cos (k \alpha_j), \]
\[ \frac{1}{\pi} \int_0^{2 \pi} d \alpha f(\alpha) \sin (k \alpha) = \frac{2}{\tilde N} \sum_{j = 1}^{\tilde N} f(\alpha_j) \sin (k \alpha_j), \]
where $\tilde N = 2N +1$, and $\alpha_j = 2 \pi (j -1)/\tilde N$, $j = 1, \dots, \tilde N$. Our numerical calculations were performed using Wolfram Mathematica \cite{mathematica}.

In the following examples we assume $\tau = 1 + a \cos(\alpha)$. This choice allows one to search for solutions for $\phi$ and $w$, for which $b_k = d_k = 0$, $k = 1, \dots, N$. Note that in such a case Eq.\ (\ref{systemS1S2momentum}) has to be projected on the set of functions $\sin(k \alpha)$, $k = 1, \dots, N$.

\begin{figure}[t]
\includegraphics[width=0.6\textwidth]{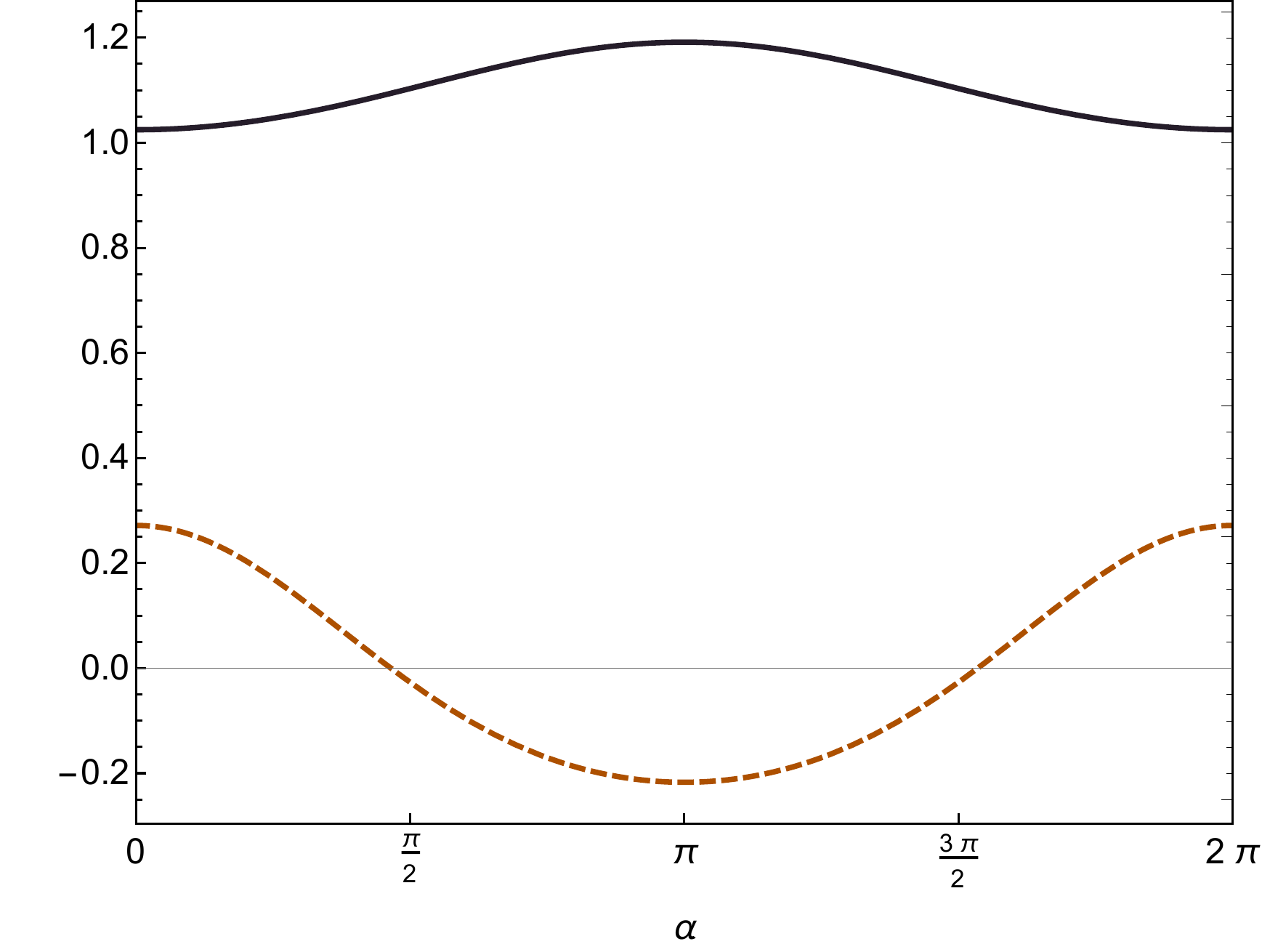}
\caption{\label{S1S2b1a12}A sample numerical solution of system (\ref{systemS1S2}) with $\tau = 1 + a \cos(\alpha)$, $a = 1/2$, $b = 1$. The solid line depicts the conformal factor $\phi$. The dashed line shows the graph of $w$.}
\end{figure}

\begin{figure}[t]
\includegraphics[width=0.6\textwidth]{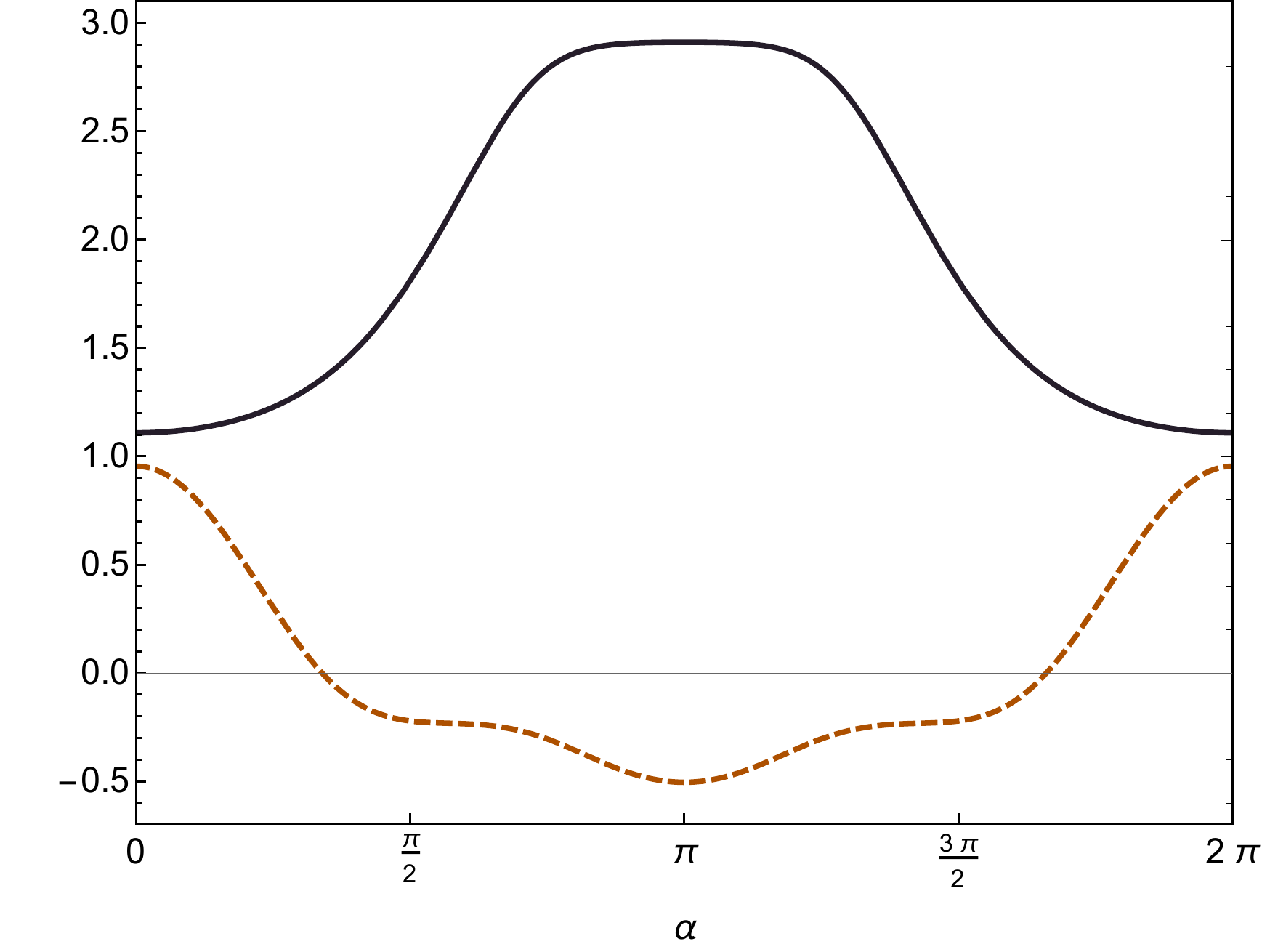}
\caption{\label{S1S2b3a2}Same as in Fig.\ \ref{S1S2b1a12}, but for $a = 2$, $b = 3$.}
\end{figure}

Sample solutions obtained for system (\ref{systemS1S2}) with $\tau = 1 + a \cos(\alpha)$ are shown in Figs.\ \ref{S1S2b1a12} and \ref{S1S2b3a2}. Figure \ref{S1S2b1a12} shows a solution corresponding to $a = 1/2$ and $b = 1$. Note that in this case $\tau$ is strictly positive. Figure \ref{S1S2b3a2} depicts a solution obtained for $a = 2$, $b = 3$. This means that the mean curvature $\tau$ changes its sign.

As usual for spectral methods, the minimum acceptable value of $N$ depends on the equation to be solved, and the desired accuracy. In the specific examples shown in Figs.\ \ref{S1S2b1a12} and \ref{S1S2b3a2} we chose $N = 25$ and $N = 55$, respectively. The quality of solutions can be assessed by computing the left-hand sides of Eqs.\ (\ref{systemS1S2}) for the obtained numerical solutions. For the solution shown in Fig.\ \ref{S1S2b1a12} these values are of the order of $10^{-15}$. For the solution depicted in Fig.\ \ref{S1S2b3a2} we get the left-hand sides of Eqs.\ (\ref{systemS1S2}) not exceeding a value of the order of $10^{-14}$.

\begin{figure}[th]
\includegraphics[width=0.6\textwidth]{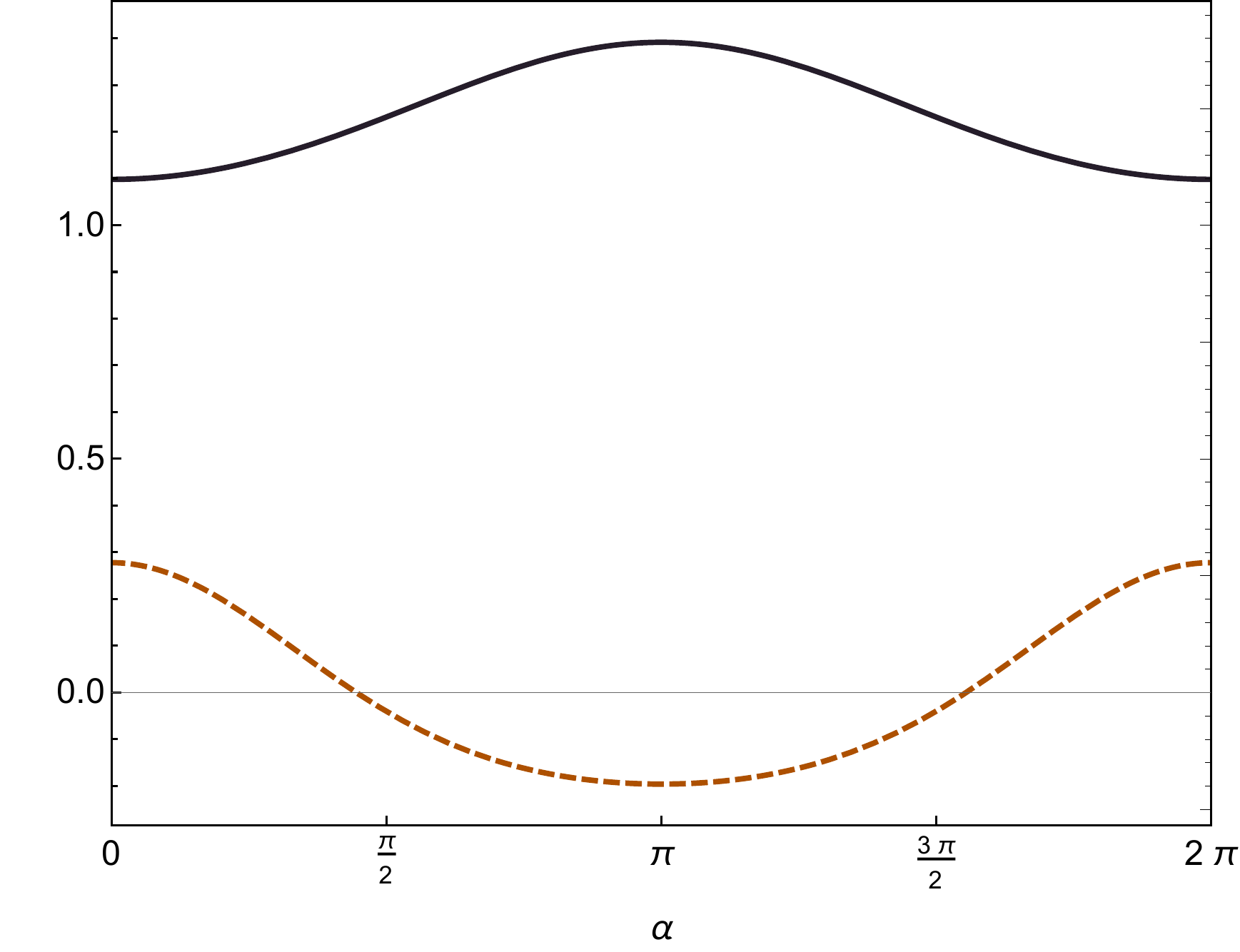}
\caption{\label{S1T2b1a12}A sample numerical solution of system (\ref{systemS1T2}) with $\tau = 1 + a \cos(\alpha)$, $a = 1/2$, $b = 1$. The solid line depicts the conformal factor $\phi$. The dashed line shows the graph of $w$.}
\end{figure}

\begin{figure}[th]
\includegraphics[width=0.6\textwidth]{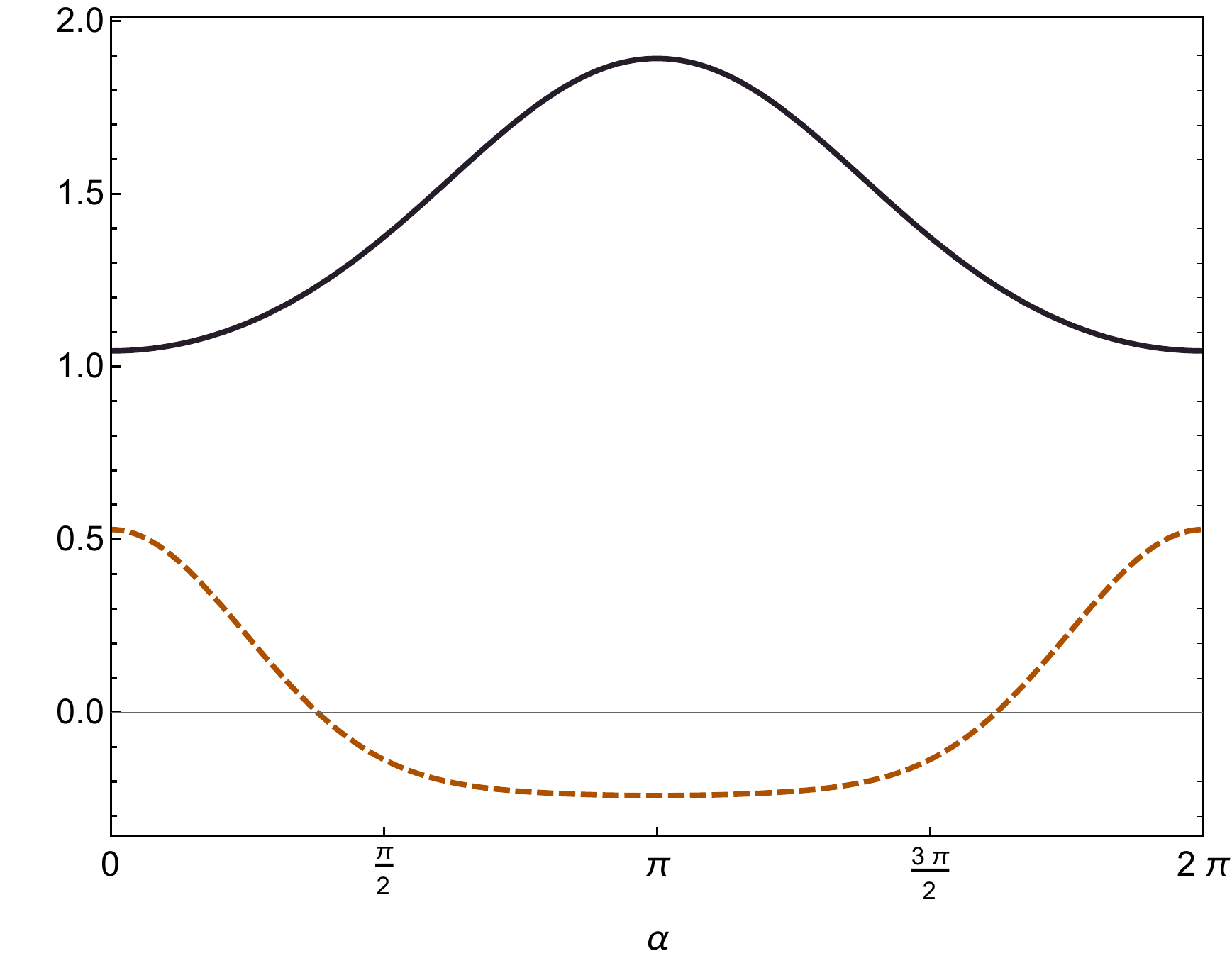}
\caption{\label{S1T2b1a1}Same as in Fig.\ \ref{S1T2b1a12}, but for $a = b = 1$.}
\end{figure}

There is another example that can be obtained in the similar fashion, and it is again motivated by a system analyzed in \cite{Dilts_numerical}. We take $M = \mathbb S^1 \times \mathbb T^2$, where the torus $\mathbb T^2 = \mathbb S^1 \times \mathbb S^1$. Let $(\alpha, \beta, \gamma)$ be the coordinates on $M$, each spanning the range $[0,2 \pi)$. We assume the metric $g$ to be flat, $g_{ij} = \mathrm{diag}(1,1,1)$, and $\sigma_{ij} = b \, \mathrm{diag}(-2, 1, 1)$, where $b$ is a constant. Clearly, $\sigma$ is a TT tensor. As before, $|\sigma|^2 = 6 b^2$. Let us also assume that $W^i = (f(\alpha),0,0)$. It follows that $(LW)_{ij} = \frac{2}{3}f^\prime \mathrm{diag}(2,-1,-1)$, $|LW|^2  = \frac{8}{3}(f^\prime)^2$, $\langle \sigma, LW \rangle  = - 4 b f^\prime$. Assuming that $\phi$ and $\tau$ depend only on $\alpha$, we also have $\nabla_i (LW)^i_j = \left(\frac{4}{3}f^{\prime \prime},0,0 \right)$, and
\[ (LW)^i_j \nabla_i \log \phi = \left( \frac{4}{3} f^\prime (\log \phi)^\prime, 0, 0\right). \]
It turns out that systems (\ref{MB}) written for the two cases $\mathbb S^1 \times \mathbb S^2$ and $\mathbb S^1 \times \mathbb T^2$ differ only in the term proportional to the scalar curvature $R$. We have $R = 0$ for the $\mathbb S^1 \times \mathbb T^2$ case. The analogue of Eqs.\ (\ref{systemS1S2}) can be now written as
\begin{subequations}\label{systemS1T2}
\begin{align}
\phi^{\prime \prime} + \frac{3}{4} b^2 \phi^{-7} - b w \phi^{-1} -  \frac{1}{3}\left( \frac{\tau^2}{4} - w^2 \right) \phi^5 &= 0, \label{systemS1T2hamiltonian} \\
w^\prime - \frac{1}{2} \tau^\prime + 6 \phi^{-1} w \phi^\prime &= 0, \label{systemS1T2momentum}
\end{align}
\end{subequations}
where again $w \equiv f^\prime$.

If $\tau^\prime = 0$, we again obtain a solution with $w = 0$. In this case Eq.\ (\ref{systemS1T2hamiltonian}) reads
\begin{equation}
\label{S1T2constt}
\phi^{\prime \prime} + \frac{3}{4} b^2 \phi^{-7} - \frac{\tau^2}{12} \phi^5 = 0.
\end{equation}
If in addition $\tau \neq 0$, there is a unique positive solution $\phi = \phi_0 = \mathrm{const}$, where
\[ \phi_0 = \left| \frac{3b}{\tau} \right|^\frac{1}{6}. \]
If, in turn, $\tau = 0$, we get from Eq.\ (\ref{S1T2constt})
\[ \phi^{\prime \prime} = - \frac{3}{4}b^2 \phi^{-7}. \]
It follows from the maximum principle that the above equation has no positive solution for $b \neq 0$.

Figures \ref{S1T2b1a12} and \ref{S1T2b1a1} show examples of solutions of system (\ref{systemS1T2}) obtained for $\tau = 1 + a \cos(\alpha)$. Figure \ref{S1T2b1a12} corresponds to $a = 1/2$, $b = 1$ (the same set of parameters, as the one chosen for the solution of system (\ref{systemS1S2}) shown in Fig.\ \ref{S1S2b1a12}). Another solution, obtained assuming $a = b = 1$, is shown in Fig.\ \ref{S1T2b1a1}. In both cases we set the series cutoff parameter $N = 25$. The precision with which Eqs.\ (\ref{systemS1T2}) are satisfied is of the order of $10^{-15}$.

Both systems (\ref{systemS1S2}) and (\ref{systemS1T2}) can be easily generalized to include the cosmological constant. System (\ref{systemS1S2}) ($M = \mathbb S^1 \times \mathbb S^2$) with the cosmological constant reads
\begin{subequations}\label{systemS1S2lambda}
\begin{align}
\phi^{\prime \prime} - \frac{1}{4} \phi + \frac{3}{4} b^2 \phi^{-7} - b w \phi^{-1} - \left( \frac{\tau^2}{12} - \frac{1}{3} w^2 - \frac{1}{4} \Lambda \right) \phi^5 &= 0, \label{systemS1S2hamlambda} \\
w^\prime - \frac{1}{2} \tau^\prime + 6 \phi^{-1} w \phi^\prime &= 0. \label{systemS1S2momlambda}
\end{align}
\end{subequations}
System (\ref{systemS1T2}) ($M = \mathbb S^1 \times \mathbb T^2$) can be generalized as
\begin{subequations}\label{systemS1T2lambda}
\begin{align}
\phi^{\prime \prime} + \frac{3}{4} b^2 \phi^{-7} - b w \phi^{-1} - \left( \frac{\tau^2}{12} - \frac{1}{3} w^2 - \frac{1}{4} \Lambda \right) \phi^5 &= 0, \label{systemS1T2hamlambda} \\
w^\prime - \frac{1}{2} \tau^\prime + 6 \phi^{-1} w \phi^\prime &= 0. \label{systemS1T2momlambda}
\end{align}
\end{subequations}

While numerical solutions of the above systems can be easily obtained using our method, the properties of these systems can be remotely different, especially for large $\Lambda > 0$. This can be seen immediately by inspecting the cases with $\tau = \mathrm{const}$ and $w = 0$. In this case system (\ref{systemS1S2lambda}) yields
\[ \phi^{\prime \prime} - \frac{1}{4} \phi + \frac{3}{4} b^2 \phi^{-7} - \left( \frac{\tau^2}{12} - \frac{1}{4} \Lambda \right) \phi^5 = 0. \]
Assume that $b \neq 0$. If $\frac{1}{3}\tau^2 - \Lambda \ge 0$, there exists a unique positive solution. This solution is constant $\phi = \phi_0 = \mathrm{const}$, and it can be obtained as a solution to the cubic equation
\[ - x^2 + 3 b^2 - \left( \frac{\tau^2}{3} - \Lambda \right) x^3 = 0, \]
where $x = \phi_0^4$. If in turn $\frac{1}{3}\tau^2 - \Lambda < 0$, solutions are no longer unique. This case was analyzed in detail in \cite{chrusciel_gicquaud}, where a corresponding bifurcation structure of solutions was also described.

System (\ref{systemS1T2lambda}) with $\tau = \mathrm{const}$ and $w = 0$ yields
\begin{equation}
\label{S1T2lambdaw0}
\phi^{\prime \prime} + \frac{3}{4} b^2 \phi^{-7} - \left( \frac{\tau^2}{12} - \frac{1}{4} \Lambda \right) \phi^5 = 0.
\end{equation}
Assume again that $b \neq 0$. If $\frac{1}{3}\tau^2 - \Lambda > 0$, there exist a unique positive solution $\phi_0$. It is constant, and it is given by
\[ \phi_0 = \left( \frac{3 b^2}{\frac{\tau^2}{3} - \Lambda} \right)^\frac{1}{12}. \]
If $\frac{1}{3}\tau^2 - \Lambda \le 0$, Eq.\ (\ref{S1T2lambdaw0}) has no positive solutions (this can again be easily seen from the maximum principle).

\section*{Acknowledgments}
The authors would like to thank the referees for very useful comments and suggestions. This research was partially supported by the National Natural Science Foundation of China grant No.\ 11671089 and the Polish National Science Centre grant No.\ 2017/26/A/ST2/00530. Part of this work was done while PM was visiting the School of Mathematical Sciences, Fudan University. He would like to thank this institution for the hospitality and financial support.

%%%%%%%%%%%%%%%%%%%%%%%%%%%%%%%%%%%%%%%%%%%%%%%%%%%%%%%%%%%%%%%%%%%%%%%%%

%%%%%%%%%%%%%%%%%%%%%%%%%%%%%%%%%%%%%%%%%%%%%%%%%%%%%%%%%%%%%%%%%%%%%%%%%
\end{document}